\newtheorem{lemma}{Lemma}
\newtheorem{assumptions}{Assumption}
\newcommand{\btheta}{\bm{\theta}}
\newcommand{\ones}{\bm{1}}
\newcommand{\bF}{\bm{F}}
\newcommand{\bQ}{\bm{Q}}
\newcommand{\bI}{\bm{I}}
\newcommand{\bG}{\bm{G}}
\newcommand{\bH}{\bm{H}}
\newcommand{\bT}{\bm{T}}
\newcommand{\bR}{\bm{R}}
\newcommand{\ba}{\bm{a}}
\newcommand{\bU}{\bm{U}}
\newcommand{\bS}{\bm{S}}
\newcommand{\bs}{\bm{s}}
\newcommand{\bt}{\bm{t}}
\newcommand{\bu}{\bm{u}}
\newcommand{\bGamma}{\bm{\Gamma}}
\newcommand{\bmu}{\bm{\mu}}
\newcommand{\bZ}{\bm{Z}}
\newcommand{\bX}{\bm{X}}
\newcommand{\bV}{\bm{V}}
\newcommand{\bx}{\bm{x}}
\newcommand{\bpi}{\bm{\pi}}
\newcommand{\brho}{\bm{\rho}}
\newcommand{\bbeta}{\bm{\eta}}
\newcommand{\keywords}[1]{%
  \vspace{1em}
  \noindent\textbf{Keywords:} #1
}
\newenvironment{funding}
  {\section*{Funding}}
  {}
\newenvironment{Conflicts of interest}
  {\section*{Conflicts of interest}}
  {}
\title{Model-Based Clustering of Football Event Sequences:\\ A Marked Spatio-Temporal Point Process Mixture Approach}
\author{
  Koffi Amezouwui\textsuperscript{1}\thanks{Corresponding author: \href{mailto:koffi.amezouwui@ensai.fr}{koffi.amezouwui@ensai.fr}},
  Brigitte Gelein\textsuperscript{2},
  Matthieu Marbac\textsuperscript{3},
  Anthony Sorel\textsuperscript{4}
}
\date{
  \textsuperscript{1}Univ. Rennes, Ensai, CNRS, CREST-UMR 9194, 35000 Rennes, France \\
  \textsuperscript{2}Ensai, IRMAR-UMR, CNRS 6625, France \\
  \textsuperscript{3}Université Bretagne Sud, UMR CNRS 6205, LMBA, F-56000 Vannes, France \\
  \textsuperscript{4}Univ. Rennes, Inria, M2S, 35000 Rennes, France \\
  \vspace{0.5em}
}
\begin{document}

\maketitle

\abstract{We propose a novel mixture model for football event data that clusters entire possessions to reveal their temporal, sequential, and spatial structure. Each mixture component models possessions as marked spatio-temporal point processes: event types follow a finite Markov chain with an absorbing state for ball loss, event times follow a conditional Gamma process to account for dispersion, and spatial locations evolve via truncated Brownian motion. To aid interpretation, we derive summary indicators from model parameters capturing possession speed, number of events, and spatial dynamics. Parameters are estimated through maximum likelihood via Generalized Expectation-Maximization algorithm. Applied to StatsBomb data from 38 Ligue 1 matches (2020/2021), our approach uncovers distinct defensive possession patterns faced by Stade Rennais. Unlike previous approaches focusing on individual events, our mixture structure enables principled clustering of full possessions, supporting tactical analysis and the future development of realistic virtual training environments.
}

\keywords{clustering, event data, marked point processes, Markov chains, mixture models, soccer}

\section{Introduction}

Recent advances in training technologies have led professional soccer clubs to adopt virtual environments \citep{richlan2023virtual,witte2025sports} that enable targeted practice with reduced physical strain, helping to prevent injuries and fatigue  \citep{jia2024application,demeco2024role,cariati2025virtual}.  The effective use of these technologies requires tools capable of populating the environments with realistic and contextually relevant scenarios, usually derived from data collected during actual matches. To inform the design of training scenarios that focus on specific situations, it is therefore essential to analyze play at the level of team possessions. In this context, it becomes necessary to identify subsets of possessions that are relevant to a specific training objective, which motivates the use of clustering. By grouping possessions into clusters representing different tactical or situational contexts, a coach can ensure that a virtual training environment is populated only with possessions that align with the goal of a given session. For instance, when a particular possession in a match results in a failure or undesired outcome, the cluster corresponding to that possession can be identified, and other possessions from the same cluster can be used to design realistic exercises that allow players to rehearse and improve their decision-making in comparable scenarios.

A natural statistical approach for such clustering is provided by finite mixture models, which represent the distribution of observed data as a combination of component distributions, with clusters defined by shared component membership \citep{McL00,Fruhwirth2019handbook}. In this framework, a standard definition of a cluster corresponds to the subset of individuals generated by the same mixture component (see \citealp{hennig2010methods} and \citealp{baudry2010combining} for several extensions, and \citealp{hennig2015true} for a discussion on cluster definitions). A finite mixture model is characterized by three main elements: the number of mixture components, the mixing proportions, and the component-specific distributions. Using such models to analyze sports data generally requires methodological developments to account for the specificity of the data as well as the clustering objective (see \citealp{yin2023analysis} and \citealp{scrucca2025model} for applications in basketball, \citealp{leroy2018functional} and \citealp{bouvet2024investigating} for applications in swimming, or \citealp{sawczuk2024bayesian} for applications in rugby). Implementing these models requires careful characterization and understanding of the collected variables to select appropriate component-specific distributions. In sports, data are primarily of two types: movement (tracking) data and event data. Movement data consist of time-stamped locations tracking all players and the ball, typically captured via optical systems (see \citealp{santos2022role}, \citealp{Sandholtz}, and \citealp{kovalchik2023player} for statistical developments for movement data). Event data, by contrast, record sequences of game events and are collected manually through video annotation (see \citealp{Narayanan}, \citealp{qi2024made}, and \citealp{jensen2009bayesball} for statistical developments for event data). Although event data are less dense than movement data, they contain sufficient information to characterize possessions, notably through key events, their locations on the field, and the speed at which they occur. Moreover, while modeling these data for clustering is complex, it is still simpler than for movement data and requires less storage space. Accordingly, we focus on event data and extend finite mixture models to handle their spatio-temporal structure, treating possessions as realizations of point processes rather than conventional multivariate observations.

In this paper, we aim to provide a clustering method to populate a virtual training environment for Stade Rennais goalkeepers with different groups of opponent possessions. For this purpose, we use event data provided by StatsBomb from 38 matches of the 2020/2021 French Ligue 1 season, focusing exclusively on possessions where Rennes is in defense. These event data can be seen as marked spatio-temporal point processes, where each on-ball action by Rennes’ opponents is a point in time and space, and the mark corresponds to the categorical event type, including absorbing states that determine possession loss (e.g., shot, foul). Temporal variables capture the exact timestamp of each action to reflect possession dynamics, while spatial coordinates, constrained to the convex pitch area, represent the location of the action. Each possession ends when an event corresponding to an absorbing state occurs, explicitly indicating the loss of the ball, which allows us to segment the match into meaningful possession for analysis and clustering.

Clustering methods applied to football possessions have primarily focused on segmenting and analyzing phases of play using event data that detail every on-field action (passes, shots, ball losses, etc.). Classical approaches often rely on non-model-based  unsupervised techniques   representing each possession with aggregated feature vectors, including the number of actions, possession duration, or initial and final player positions \citep{narizuka2019clustering,fernandez2024reporting}. Other studies leverage finer temporal and sequential structures by modeling possessions as marked event sequences and applying similarity measures on action trajectories or passing networks 
\citep[Chapter 6]{el2021game}. However, these approaches remain limited: they rarely capture the temporal, sequential, and spatial dimensions of possessions simultaneously and often rely on aggregated representations or ad hoc similarity metrics, restricting tactical interpretability and generalization to full-game simulation. These limitations motivate the development of a marked spatio-temporal point process mixture model, capable of coherently capturing the complete structure of football possessions.

To develop the mixture model needed for the analysis of football event data, we build on existing statistical models designed for this type of data. Early statistical works in football analytics, such as \citet{dixon1997modelling} and more recently \citet{michels2025extending}, focused on modeling match scores and outcomes using probabilistic frameworks, highlighting the benefits of hierarchical and time-dependent structures for predictive accuracy. Building specifically on event-level data, \citet{baio2010bayesian} proposed a Bayesian hierarchical model to predict goals while accounting for team interactions and match context. More recently, \citet{Narayanan} developed a flexible framework treating on-ball actions as marked spatio-temporal point processes, where the type of event constitutes the mark.  By decoupling the modeling of event times from event types, the approach allows flexible specification for the marks while avoiding clustering of event times. Using a comprehensive Bayesian framework, it captures team-specific abilities, event interactions across locations, and game dynamics, and supports prediction and simulation of event sequences, even with limited data. The model also provides interpretable parameters, such as team ability rankings by event type and conversion rates, which offer insights into the underlying structure of the game. However, unlike in our context, their methodology does not allow for clustering, since it is not a mixture model, and it treats individual events as the statistical unit. In contrast, we focus on entire possessions sequences of events ending with an absorbing state corresponding to ball loss allowing us to cluster possessions into distinct defensive scenarios. This clustering is crucial for populating a virtual training environment for Rennes goalkeepers with realistic possession sequences.

We propose a mixture model tailored to the structure of football possessions. Each mixture component models the vector of event types within a possession using a finite Markov chain with an absorbing state, which represents ball loss and therefore the end of the possession. Conditional on the sequence of event types, the occurrence times of events are modeled using a conditional Gamma process, capturing both under- and over-dispersion while avoiding auto-excitation, in line with the observations of \citet{Narayanan}. Finally, for each component, the spatial locations of events are modeled using a Brownian motion truncated to the convex space of the pitch, conditional on their type and occurrence time. This modeling framework captures the main characteristics of the data while allowing an intuitive analysis of each cluster in terms of event sequencing speed within a possession, the number of events per possession, and the spatial dynamics. To further facilitate interpretation, we introduce three summary indicators derived from the model parameters, which encapsulate these key features for each cluster. Applied to StatsBomb event data, our mixture model identifies representative possession types that capture realistic game dynamics in terms of tempo, event sequences, and spatial patterns. This modeling framework provides the essential clustering of possessions, which can later be used to populate a virtual training environment for Rennes goalkeepers with varied and plausible scenarios.

This paper is organized as follows. Section~\ref{data::description} introduces the context of the study and the StatsBomb dataset. Section~\ref{sec::model} presents a mixture model of marked point processes with an absorbing mark. Section~\ref{sec:MLE} describes the GEM algorithm used for maximum likelihood estimation. A numerical study of this approach on simulated data is provided in Section~\ref{sec:simu}. Section~\ref{sec:analysis} is devoted to the analysis of the dataset and the interpretation of the clusters obtained with the proposed method. Finally, Section~\ref{sec:concl} concludes with key insights and directions for future research.

\section{StatsBomb data description}\label{data::description}
\subsection{Dataset overview}
The data used in this study are provided by StatsBomb and come from 38 professional football matches played during the 2020/2021 Ligue 1 season in France. The Ligue 1 championship consists of 20 teams playing in a double round-robin format, whereby each team faces all others twice, once at home and once away, resulting in 38 matches per team throughout the season. Each match consists of two 45-minute halves, separated by a 15-minute half-time interval. We focus exclusively on the matches played by Stade Rennais Football Club (Rennes) and have access to event data, which record discrete on-ball actions occurring during the match, such as passes, receptions, and shots, each annotated with a timestamp and spatial coordinates on the pitch. We aim to model the offensive phases of Rennes’ opponents in order to train the Rennes defensive players. This means analyzing only those possessions in which Rennes opponents have the ball. A possession is a phase of uninterrupted succession of events that ends with the loss of the ball due to a specific event (e.g., foul, interception, or the ball going out of play). Moreover, events characterize what happens in a possession. It is therefore natural to be interested in this type of data to cluster the different possessions. The event data capture player actions, enabling analysts, coaches, and researchers to perform an in-depth analysis of match dynamics and team behaviors. To ensure a reliable analysis, the raw data were preprocessed to remove anomalies, as detailed in Subsection \ref{data::prep}. This process corrected various anomalies, such as timestamp inconsistencies and overlapping events. For each match, the collected data includes several variables, such as the match period, event type, possession team name, event timestamp with millisecond precision, and the spatial coordinates of the event expressed in meters relative to the pitch. This dataset thus provides sufficient granularity to enable detailed analysis of possession sequences, both individual and collective actions, and the spatio-temporal behavior of teams throughout the matches.

\subsection{Data preprocessing}\label{data::prep}
The data preparation process involved several steps to ensure the raw event data was ready for analysis. The event coordinates, recorded on a pitch of size $[0,120]\times[0,80]$, are initially expressed in the reference frame of the team performing the action of event type (always oriented left to right).

Prior to analysis, several inconsistencies were identified in the original dataset, the most critical involving duplicate events with identical timestamps and non-uniform timestamp formats. These issues were corrected before proceeding. To ensure spatial coherence, all event coordinates were projected onto a common reference frame. Missing values were imputed by carrying forward the coordinates from the most recent valid event. Approximately $4\%$ of the possessions were discarded due to manifest labeling errors, and additional possessions were excluded because of unrealistic player movement, defined as a velocity between consecutive events exceeding the $98^{th}$ percentile. After these preprocessing steps, the final dataset used for analysis comprised $n = 2623$ possessions. The number of possessions per matchday ranges from 48 to 112, with an average of 69.03.

\subsection{Structure and spatio-temporal dynamics of possessions}

For analytical purposes, a possession systematically ends with a ball loss, recorded as a unique event labeled \emph{End Possession}. Each new possession begins immediately with the recovery of the ball, labeled \emph{Start Possession}. Table~\ref{tab:event_occurrences} summarizes the number of occurrences for each event type across the cleaned dataset. Passes and ball carries account for the majority of actions, followed by pressures and possession boundaries. Less frequent but still relevant events include duels, blocks, shots, and various defensive interventions such as interceptions and ball recoveries.

\begin{table}[h]
\centering
\begin{tabular}{l r @{\hskip 1.5cm} l r}
\toprule
\textbf{Event Type} & \textbf{Count} & \textbf{Event Type} & \textbf{Count} \\
\midrule
Pass               & 10780 & Miscontrol       & 184   \\
Carries            & 7753  & Block            & 180   \\
Pressure           & 2986  & Shot             & 166   \\
Start Possession   & 2623  & Ball Recovery    & 120   \\
End Possession     & 2623  & Interception     & 89    \\
Ball Receipt      & 541     & Goal Keeper      & 51    \\
Clearance          & 360     & Foul Won         & 23    \\
Duel               & 185     &                  &       \\
\bottomrule
\end{tabular}
\caption{
Number of occurrences for each event type across the 2623 possessions kept for the analysis, presented in decreasing order.}
\label{tab:event_occurrences}
\end{table}

Our dataset also includes variables capturing the time differences between consecutive events, as well as the coordinates of the events, enabling a detailed analysis of game dynamics. On average, each possession comprises 10.93 events, with a mean inter-event time of 2.49 seconds and a variance of 41.06. The average elapsed time until possession loss or termination (\emph{i.e.}, the time to reach the \emph{End Possession} state) is approximately 27.25 seconds. These metrics are essential for characterizing the nature of each possession, as they reflect both the temporal rhythm and the spatial progression of play.

We analyzed the transition dynamics between event types by examining how frequently one type of event follows another within the flow of possession. Each transition corresponds to a change from a preceding event to a current event, allowing us to assess structured patterns in the sequencing of actions. Pearson’s $\chi^2$ test of independence revealed a highly significant dependence between current and preceding event types $(p < 2.2 \times 10^{-16})$. This result confirms that event transitions follow certain patterns. Leveraging the joint availability of timestamps and event types, we examined whether inter-event times depend on sequential structure. A two-way analysis of variance (ANOVA) was performed to assess the effects of both preceding and current event types, with results indicating that both factors exert a highly significant influence ($p < 2 \times 10^{-16}$). However, the preceding event accounts for a substantially larger share of the variance, indicating that actions occurring immediately beforehand exert a strong influence on the pace of subsequent play. To further investigate spatial dynamics, we analyzed horizontal and vertical displacements between consecutive events. Descriptive statistics of these displacements were computed in the same manner as for inter-event times, enabling a detailed examination of movement patterns across different sequences of play. This analysis reveals directional patterns associated with different sequences of play, while ensuring that all movements remain confined within the boundaries of the pitch.

\section{Model-based clustering of marked point processes with absorbing mark}\label{sec::model}
	
\subsection{Random variables involved by soccer event data}
	We consider a random sample $\mathbf{X}=(\bX_1^\top,\ldots,\bX_{n}^\top)^\top$ composed of $n$ independent and identically distributed sequences of events, where $\bX_i$ describes the events of sequence $i$. A sequence ends with the loss of ball possession by a team, and thus terminates with a specific event that causes this loss (\emph{e.g.,} a foul, interception by a player of the opposing team, the ball being sent out of the field, etc.). As soon as a sequence ends, the next sequence begins. Sequence $i$ is described by $M_i$ events leading that $\bX_i=(\bX_{i,0}^\top, \ldots, \bX_{i, M_i}^\top)^\top$.
Event $j$ of possession $i$ is described by a triplet of random variables, such that $\bX_{i,j}=(S_{i,j},T_{i,j}, \bU_{i}^\top)^\top$, where $S_{i,j} \in \mathcal{E}$ indicates the nature of the event, $T_{i,j} \in \mathbb{R}^{+}$ denotes the time of occurrence of event $j$ since the beginning of possession $i$  and $\bU_{i}=(U_{i,j,1},U_{i,j,2})^\top\in[b_{1,1},b_{2,1}]\times[b_{1,2},b_{2,2}]$ indicates the coordinates of the events in the field. The set $\mathcal{E} = {0, \ldots, E+1}$ represents the different event types. Event 0 denotes the acquisition of ball possession by the team, whereas Event $E+1$ corresponds to the loss of possession. The intermediate events $1$ to $E$ do not entail any change in possession status. Consequently, each possession sequence necessarily begins with Event 0, which does not reoccur within the same sequence, and terminates upon the occurrence of Event $E+1$. Therefore, for any possession $i=1,\ldots,n$, we have $S_{i,0}=0$, $T_{i,0}=0$, $S_{i,M_i}=E+1$ and $S_{i,j}\notin\{0,E+1\}$ for $j=1,\ldots,M_i-1$. In addition, the possession $i$ starts where possession $i-1$ ends leading that $\bU_{i,0}=\bU_{i-1,M_{i-1}}$. The duration of a sequence $T_{i,M_i}$ as well as the number of events $M_i$ are random and so may vary between the $n$ sequences. Hence, $\bX_i$ is a marked point process where the marks are given by $\bS_i=(S_{i,1},\ldots,S_{i,M_i})^\top$ and $\bU_i=(\bU_{i,1}^\top,\ldots,\bU_{i,M_i}^\top)^\top$. This process ends when the mark providing the nature of the event reaches an absorbing state that indicates the end of the possession.

Table \ref{tab:data_clean_sequence} presents a snapshot of the cleaned event data for the 38 matches (specifically from Matchday 1). Each row in the table represents a transition between two consecutive events within a single possession. It includes variables such as the previous and current event types ($S_{i,j-1}$ and $S_{i,j}$), the elapsed time since the beginning of the possession ($T_{i,j}$), the spatial coordinates of the events in the field ($U_{i,j,1}$, $U_{i,j,2}$). The labelled events, i.e., the Start Possession and the End Possession, enable each phase of possession to be clearly segmented, making it easier to model the dynamics of sequential events. To illustrate the structure and flow of a single possession more clearly, Figure~\ref{fig:poss5_matchday1} presents an example sequence taken from the possession shown in Table~\ref{tab:data_clean_sequence}, based on cleaned and labeled data from Matchday 1. This visualization highlights the temporal and spatial chaining of events, offering insights into how play unfolds within a single possession.

\begin{table}[h]
\centering
\begin{tabular}{rllrrr}
\toprule
  & $S_{i,j-1}$ & $S_{i,j}$ & \textbf{$T_{i,j}$} & \textbf{$U_{i,j,1}$} & \textbf{$U_{i,j,2}$} \\ 
  \hline
  & Start Possession & Pass & 11.9 & 77.9 & 5.1 \\ 
  & Pass & Carries & 14.0 & 53.9 & 18.2 \\ 
  & Carries & Pass & 15.5 & 53.7 & 18.8 \\ 
  & Pass & Carries & 16.5 & 63.9 & 21.5 \\ 
  & Carries & Pass & 16.5 & 64.2 & 21.5 \\ 
  & Pass & Carries & 18.0 & 47.0 & 21.6 \\ 
  & Carries & Pass & 19.0 & 46.9 & 21.6 \\ 
  & Pass & Carries & 20.6 & 39.7 & 58.1 \\ 
  & Carries & Pressure & 22.1 & 44.1 & 61.9 \\ 
  & Pressure & Pass & 22.3 & 45.6 & 62.9 \\ 
  & Pass & Carries & 23.7 & 50.0 & 77.0 \\ 
  & Carries & Pass & 24.3 & 50.0 & 77.0 \\ 
  & Pass & Carries & 25.1 & 62.6 & 78.2 \\ 
  & Carries & Pass & 25.2 & 62.6 & 78.2 \\ 
  & Pass & Carries & 26.4 & 66.9 & 70.7 \\ 
  & Carries & Foul Won & 27.2 & 70.3 & 72.3 \\ 
  & Foul Won & Pass & 29.8 & 85.4 & 74.0 \\ 
  & Pass & End Possession & 30.1 & 85.8 & 66.4 \\ 
\bottomrule
\end{tabular} 
\caption{Events sequence for possession 5 on Matchday 1, showing transitions alongside temporal ($T_{i,j}$) and spatial coordinates ($U_{i,j,1}$, $U_{i,j,2}$) on the pitch.}
\label{tab:data_clean_sequence}
\end{table}

\begin{figure}[h]
    \centering
\includegraphics[width=1.0\linewidth]{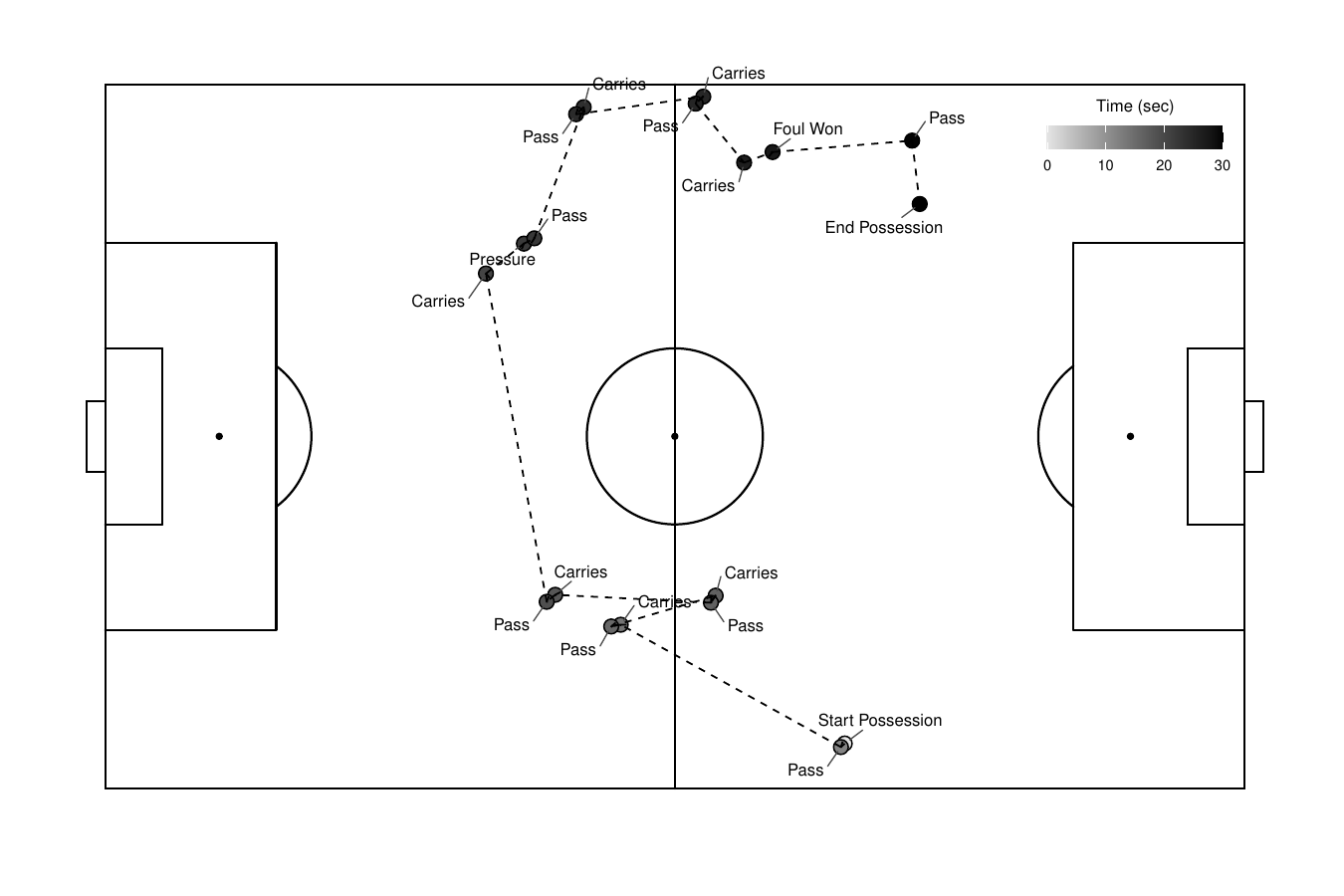}
    \caption{Spatial and temporal representation of events during possession 5 on Matchday 1, showing the sequence of actions along with their field locations and time progression. For visual clarity, Pass events were slightly shifted down-left and Carries events were shifted up-right by 0.4 units on both axes. This offset is purely for visualization and does not affect the original event coordinates.}
    \label{fig:poss5_matchday1}
\end{figure}

\subsection{Mixture model of point processes with Markovian properties}
We aim to cluster soccer sequences based on information describing their events (nature of the events, time of arrivals of the events, location of the events) in order to identify groups composed of similar chains of events. More specifically, a group consists of a subset of possessions whose sequences of events follow the same distribution. To this end, we consider a latent random variable $\bZ_i=\left(Z_{i, 1}, \ldots, Z_{i, K}\right)^\top$ that represents the cluster assignment for possession $i$, where $Z_{i, k}=1$ if possession $i$ belongs to cluster $k$, and $Z_{i, k}=0$ otherwise, with $K$ being the number of clusters.
The latent variables $\bZ_i$ are independent and follow a multinomial distribution parameterized by $\boldsymbol{\pi}=\left(\pi_1,\ldots,\pi_K\right)^\top$, where $\pi_k=\mathbb{P}(Z_{i, k}=1)$ denotes the probability that a sequence belongs to cluster $k$, and satisfies $\sum_{k=1}^K \pi_k=1$. 
Therefore, the marginal density of $\bX_i$ is defined by
\begin{equation}\label{eq:model1} 
f(\bx_i ; \btheta) = \sum_{k=1}^{K} \pi_k g(\bs_i;\bGamma_k)h(\bt_i\mid \bs_i; \boldsymbol{\rho}_k)\ell(\bu_i\mid\bs_i,\bt_i; \bbeta_k), 
\end{equation}
with $\btheta = \{\bpi, \bGamma_1,\ldots,\bGamma_K,\brho_1, \ldots, \brho_K,\bbeta_1,\ldots,\bbeta_K \}$ groups all the model parameters, $g(\cdot; \bGamma_k)$, $h(\cdot\mid \bs_i; \boldsymbol{\rho}_k)$ and $\ell(\cdot\mid\bs_i,\bt_i;\bbeta_k)$ denotes the conditional probability density function of $\bS_i\mid Z_{i,k}=1$,  $\bT_i\mid \bS_i=\bs_i,Z_{i,k}=1$ and  $\bU_i\mid\bS_i=\bs_i,\bT_i=\bt_i,Z_{i,k}=1$ respectively.

The model assumes that the vector of marks indicating the type of events, given that possession $i$ belongs to cluster $k$, follows a finite-state Markov chain defined by
\begin{equation}\label{eq:model2}  g(\bs_i;\bGamma_k)  = \prod_{j=1}^{M_i} \bGamma_k[s_{i,j-1},s_{i,j}],\end{equation}
where $\bGamma_k$ is an $(E+1) \times (E+1)$ stochastic matrix (\emph{i.e.,} each element is strictly positive and each row sums to one) such that $\bGamma_k[e,e']$ denotes the probability of observing event $e' \in \{1,\ldots,E+1\}$ given that the previous event was $e$. Although the cardinality of $\mathcal{E}$ is $E+2$, event $0$ never reoccurs within a possession and event $E+1$ corresponds to an absorbing state that terminates the process. Consequently, the finite Markov chain can be fully characterized by $\bGamma_k$. By completing this matrix with a column of zeros and a row of zeros except for the last element, which is equal to one, the transition matrix over all $E+2$ events is obtained.

The conditional distribution of the vector of arrival times, given the event types and the possession cluster, is assumed to follow a Gamma process, which implies that the inter-event times are conditionally independent given the event types and cluster memberships. Denoting $\Delta t_{i,j} = t_{i,j} - t_{i,j-1}$ as the time elapsed between event $j-1$ and event $j$ for possession $i$, the model further assumes that $\Delta t_{i,j}$ is conditionally independent of any other variable given the component membership and the type of the $j$-th event of the possession (\emph{i.e.}, given $(S_{i,j},\bZ_i^\top)^\top$). Hence, we have
\begin{equation}\label{eq:model3} h(\bt_i\mid \bs_i; \boldsymbol{\rho}_k ) = \prod_{j=1}^{M_i} \prod_{e=1}^{E+1} \left[\gamma(\Delta t_{i,j}; \rho_{k,e,1} , \rho_{k,e,2})\right]^{\mathds{1}_{\{S_{i,j}=e\}}},\end{equation}
where $\gamma(\cdot;a,b)$ denotes the density of a gamma distribution with shape parameter $a$ and scale parameter $b$, $\brho_k=(\brho_{k,1},\ldots,\brho_{k,E+1})^\top$ collects the parameters of all gamma distributions, and $\brho_{k,e}=(\rho_{k,e,1},\rho_{k,e,2})^\top$ are the parameters of the gamma distribution associated with event $e$ for component $k$.
    
The conditional distribution of the difference in spatial coordinates, given $\bZ_i$, $\bS_i$, and $\bT_i$, is modeled by a truncated Gaussian distribution centered at the previous location, with a diagonal variance matrix that depends on the component membership and the previous state. Specifically, the conditional distribution of $U_{i,j,h}$, given the history of the possession up to event $j$ and the cluster memberships, is a truncated Gaussian distribution that depends on $(\bZ_i, U_{i,j-1,h}, S_{i,j}, \Delta t_{i,j})$, with a support ensuring that the events occur within the field of dimensions $[b_{1,1}, b_{2,1}] \times [b_{1,2}, b_{2,2}]$, corresponding to a length of [0,120] and a width of [0,80]. In particular, the conditional distribution of $U_{i,j,h}$ is a truncated Gaussian on $[b_{1,h}, b_{2,h}]$, with mean $U_{i,j-1,h}$ and variance $\eta^2_{k,h,e}\Delta t_{i,j}$, where $e$ is such that $S_{i,j} = e$. Therefore, we have 
\begin{equation}\label{eq:model4} 
\ell(\bu_i\mid\bs_i,\bt_i; \bbeta_k) = \prod_{j=1}^{M_i}\prod_{h=1}^2\prod_{e=1}^{E+1} \left[\frac{ \phi(\Delta u_{i,j,h}/\eta_{k,h,e}) }{\eta_{k,h,e} \left[ \Phi\left( \frac{\Delta b_{2,i,j,h}}{\eta_{k,h,e}}\right) - \Phi\left(\frac{\Delta b_{1,i,j,h}}{\eta_{k,h,e}}\right) \right] } \mathds{1}_{\{b_{1,h}\leq u_{i,j,h}\leq b_{2,h}\}}\right]^{\mathds{1}_{\{S_{i,j}=e\}}},
\end{equation}
where $\bbeta_k=(\eta_{k,1,1},\eta_{k,2,1},\eta_{k,1,2},\ldots,\eta_{k,2,E})$, $\Delta u_{i,j,h}=(u_{i,j,h}-u_{i,j-1,h})/\sqrt{\Delta t_{i,j}}$, $\Delta b_{1,i,j,h}=(b_{1,h}-u_{i,j-1,h})/\sqrt{\Delta t_{i,j}}$,  $\Delta b_{2,i,j,h}=(b_{2,h}-u_{i,j-1,h})/\sqrt{\Delta t_{i,j}}$, $\phi$ is the density of a standard Gaussian distribution and $\Phi$ is the cumulative distribution function of a standard Gaussian distribution.

\subsection{Model properties}\label{subsec:prop}
In this section, we state some properties of the model defined by \eqref{eq:model1}-\eqref{eq:model4}. Establishing the identifiability of the model parameters is a fundamental prerequisite for ensuring the validity of parameter estimation procedure and, consequently, the reliability of the estimated partition. The following proposition gives sufficient conditions that ensure that this property holds.
\begin{assumptions}\label{ass:ID}
\begin{enumerate}
\item Each proportion is strictly positive: $\pi_k> 0$.\label{ass:prop} 
\item For any component $k$, the transition matrix of the finite Markov chain implies that $\{1,\ldots,E\}$ belong to the same communication class and that this class is aperiodic and transient.\label{ass:Gamma} 
\item The parameters stating the conditional gamma process are different: $\brho_{k,e}\neq\brho_{\ell,e}$ for $k\neq\ell$ and any $e\in\{1,\ldots,E+1\}$.\label{ass:rho}
\end{enumerate}
\end{assumptions}

\begin{lemma}\label{lem:idengrl}
If Assumption~\ref{ass:ID} holds true, then the parameters of the model defined by \eqref{eq:model1}-\eqref{eq:model4} are identifiable up to label swapping.
\end{lemma}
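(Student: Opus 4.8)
The plan is to assume that two parameter sets $\btheta$ and $\btheta'$ induce the same marginal density, $f(\cdot\,;\btheta)=f(\cdot\,;\btheta')$, and to recover the components block by block, using the conditional gamma process as the device that separates and labels the mixture components. The first move is to integrate \eqref{eq:model1} over the spatial coordinates $\bu$. Since $\ell(\cdot\mid\bs,\bt;\bbeta_k)$ is a proper conditional density (a product of truncated Gaussians, each integrating to one over $[b_{1,h},b_{2,h}]$), this removes the spatial factor and yields the marginal of $(\bS,\bT)$,
\[
f_{S,T}(\bs,\bt;\btheta)=\sum_{k=1}^K \pi_k\, g(\bs;\bGamma_k)\, h(\bt\mid\bs;\brho_k),
\]
on which the two parameterizations must therefore agree. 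For a \emph{fixed} observed mark sequence $\bs=(0,s_1,\ldots,s_{M-1},E+1)$, this object is, as a function of $\bt$, a finite mixture of product-gamma densities $\prod_{j=1}^{M}\gamma(\Delta t_j;\brho_{k,s_j})$ with weights $c_k(\bs)=\pi_k g(\bs;\bGamma_k)$. By Assumption~\ref{ass:ID}.\ref{ass:rho}, at every coordinate $j$ the parameters $\brho_{1,s_j},\ldots,\brho_{K,s_j}$ are pairwise distinct across components, so the one-dimensional gamma densities are linearly independent coordinate by coordinate; a short induction (fixing one time coordinate at a time) then shows that the diagonal products $\prod_j\gamma(\cdot;\brho_{k,s_j})$ are linearly independent across $k$. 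Together with the classical identifiability of finite gamma mixtures, this determines, for each $\bs$, the weights $c_k(\bs)$ and the parameter tuples $(\brho_{k,s_j})_j$ of the active components.

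Next I would fix a consistent labelling of components across different $\bs$. Every possession terminates at the absorbing event $E+1$, so the last factor of each product is $\gamma(\cdot;\brho_{k,E+1})$, and by Assumption~\ref{ass:ID}.\ref{ass:rho} the value $\brho_{k,E+1}$ is distinct across components; it therefore serves as a global tag that matches the terms obtained for different mark sequences, and $K$ is identified as the number of distinct such tags. Summing $c_k(\bs)=\pi_k g(\bs;\bGamma_k)$ over all finite mark sequences returns $\pi_k$, because the communication class $\{1,\ldots,E\}$ is transient (Assumption~\ref{ass:ID}.\ref{ass:Gamma}), so absorption at $E+1$ is almost sure and the path probabilities sum to one; Assumption~\ref{ass:ID}.\ref{ass:prop} guarantees these are genuine (nonzero) components. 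Dividing then identifies $g(\bs;\bGamma_k)$ for every path, and the one-step transition probabilities $\bGamma_k[e,e']$ follow from these path probabilities by the Markov property. The strict positivity of the entries of $\bGamma_k$ together with Assumption~\ref{ass:ID}.\ref{ass:Gamma} ensure that every state $e$ lies on a positive-probability path, so the corresponding $\brho_{k,e}$ is recovered for all $e\in\{1,\ldots,E\}$ as well.

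With $\{\pi_k,\bGamma_k,\brho_k\}$ in hand, the posterior weights $W_k(\bs,\bt)=\mathbb{P}(Z_{i,k}=1\mid\bS_i=\bs,\bT_i=\bt)$ become \emph{known} functions, and the conditional law of $\bU$ given $(\bs,\bt)$ is a known-weight mixture of the truncated-Gaussian products $\ell(\cdot\mid\bs,\bt;\bbeta_k)$. To extract a single scale $\eta_{k,h,e}$ I would specialize to $\bs=(0,e,E+1)$, condition on $T_1=t_1$ and on the initial location $\bU_{i,0}$, and examine one coordinate $U_{i,1,h}$: its conditional density is a known-weight mixture of truncated Gaussians on $[b_{1,h},b_{2,h}]$ that share the mean $U_{i,0,h}$ but have component variances $\eta_{k,h,e}^2\,t_1$. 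Since the Gaussian scale family is identifiable and the component-specific weight functions inherit linear independence in $t_1$ from the $h(\cdot\mid\bs;\brho_k)$, each $\eta_{k,h,e}$ is pinned down, hence all of $\bbeta_k$.

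Collecting these steps recovers the full tuple $(\pi_k,\bGamma_k,\brho_k,\bbeta_k)$ for each $k$, with the only residual freedom being a permutation of the component indices, i.e.\ identifiability up to label swapping. I expect the main obstacle to be the spatial block: because $\ell$ depends on $\bt$ through the variance scaling $\Delta t_{i,j}$, the components cannot be disentangled directly inside the $\bU$-conditional, so the argument genuinely relies on first having identified $\{\pi_k,\bGamma_k,\brho_k\}$ to make the mixing weights known and to isolate a single event as a Gaussian scale mixture. The linear-independence lemma for product-gamma densities in the first step is the other technical point, but it is routine once Assumption~\ref{ass:ID}.\ref{ass:rho} supplies the coordinatewise distinctness that drives the whole separation.
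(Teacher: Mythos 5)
Your proposal is correct in substance and follows the same three-stage skeleton as the paper's proof---identify the event-specific gamma time parameters first, then the weights $\pi_k\, g(\bs;\bGamma_k)$ (hence $\bpi$ and the $\bGamma_k$), and finally the spatial scales via linear independence of the time densities---but your key device in the first stage is genuinely different. The paper imposes an explicit ordering of the components through the gamma parameters at a reference event $e_0$ and peels components off one at a time by a tail-ratio limit $\Delta t_{M_0}\to\infty$; this is self-contained and resolves the matching between the two parameterizations through the ordering itself. You instead invoke classical (Teicher-type) identifiability of finite gamma mixtures, extended to diagonal product-gamma densities by a coordinatewise linear-independence induction, and you solve the cross-$\bs$ labelling problem with the terminal tag $\brho_{k,E+1}$, which is always present since every possession is absorbed; this is shorter and arguably more transparent, at the price of importing an external mixture-identifiability lemma together with its (routine) product extension. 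Your recovery of $\pi_k$ by summing $c_k(\bs)$ over all paths, justified by almost-sure absorption under Assumption~\ref{ass:ID}.\ref{ass:Gamma}, matches the paper's ``summing over all possible values of $\bs$'' step; note also that since every entry of $\bGamma_k$ is strictly positive by the model definition, every event lies on a positive-probability path, so you recover $\brho_{k,e}$ for all $e$ without further appeal to the reachability part of Assumption~\ref{ass:ID}.\ref{ass:Gamma}. The one spot where your sketch needs tightening is the spatial step: the truncated-Gaussian difference itself depends on $t_1$, so ``linear independence in $t_1$'' does not by itself annihilate the coefficients. The clean fix, available inside your own construction with $\bs=(0,e,E+1)$, is to hold $(t_1,u)$ fixed and vary the terminal inter-event time $\Delta t_2$, on which the law of $U_{i,1,h}$ does not depend: linear independence of $\gamma(\cdot\,;\rho_{k,E+1,1},\rho_{k,E+1,2})$ across $k$ (guaranteed by Assumption~\ref{ass:ID}.\ref{ass:rho}) then forces each coefficient $\ell_k-\widetilde{\ell}_k$ to vanish pointwise, and scale-identifiability of a truncated Gaussian with known mean and known truncation interval yields $\eta_{k,h,e}=\widetilde{\eta}_{k,h,e}$---which is exactly the linear-independence device the paper uses to close its own final step, so the repair is immediate.
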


From the parameters of the model defined by \eqref{eq:model1}–\eqref{eq:model4}, we can compute three statistics that facilitate the interpretation of the clusters.
The first is the expected number of events in a possession, excluding the event corresponding to the gain of possession, given that the possession originates from component~$k$. It is defined by
$$
\lambda_k=\mathbb{E}[M_i\mid Z_{i,k}=1].
$$
The second is the conditional expectation of $\bV_i=(V_{i,1},\ldots,V_{i,E})$ where $V_{i,e}=\sum_{j=1}^{M_i} \mathds{1}_{S_{i,j}=e}$ corresponds to the number of occurrences of event~$e$, given that the possession originates from component~$k$. We denote this by $\boldsymbol{\kappa}_k = (\kappa_{k,1}, \ldots, \kappa_{k,E})^\top$, where
$$
\boldsymbol{\kappa}_{k} = \mathbb{E}[\bV_{i}\mid Z_{i,k}=1].
$$
The third is the conditional expectation of the duration of a possession, given that the possession originates from component~$k$, defined as
$$
\zeta_k = \mathbb{E}[T_{i,M_i}\mid Z_{i,k}=1].
$$
These three statistics are computed from the model parameters. In particular, from the parameter $\bGamma_k$ of the finite Markov chain, we define $\ba_k$ as the vector whose $e$-th element is $\ba_k[e] = \bGamma_k[1,e]$ for $e \in {1, \ldots, E}$, corresponding to the probability that $S_{i,1} = e$ given $Z_{i,k} = 1$. We also set $r_k = \bGamma_k[1,E+1]$ as the probability that $S_{i,1} = E+1$ given $Z_{i,k} = 1$, $\bQ_k$ as the $E \times E$ sub-stochastic matrix obtained from $\bGamma_k$ by removing its first row and first column, and $\bR_k$ as the $E$-dimensional vector formed by the last column of $\bGamma_k$ excluding its first element.
We denote by $\bF_k$ the $E \times E$ fundamental matrix, defined by
$$
\bF_k=(\bI_E - \bQ_k)^{-1},
$$
where $\bI_E$ denotes the identity matrix of size $E\times E$.

The following lemma provides explicit formulas to compute $\lambda_k$ and $\boldsymbol{\kappa}_k$ from $\bGamma_k$. In particular, it offers a convenient way to summarize the transition matrix $\bGamma_k$ by reporting the expected number of events in a possession, as well as the expected number of occurrences for each event, for a given cluster. This allows for a direct comparison of clusters based on both the total number of events and the event-specific frequencies.

\begin{lemma}\label{lem:expectm}
Under Assumption~\ref{ass:ID}.\ref{ass:Gamma}, the conditional expectation of the number of events in a possession, excluding the event corresponding to the gain of possession, given its component membership, is defined by
$$
\lambda_k =1+\ba_k^\top \bF_k\ones_E, 
$$
where $\ones_{E}$ is the vector of $E$ ones and in addition  
$$
\boldsymbol{\kappa}_k = \ba_k^\top \bF_k,
$$
where $\boldsymbol{\kappa}_k $ is the conditional  expectation of $\bV_{i}$ given $Z_{i,k}=1$, $\bV_{i}$ being the vector whose element $e$ denotes the number of visits of events $e$ for possession $i$.
\end{lemma}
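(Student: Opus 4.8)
The plan is to recognize the sequence of marks $S_{i,1},S_{i,2},\ldots$ as an absorbing Markov chain in which $\{1,\ldots,E\}$ are the transient states and $E+1$ is the unique absorbing state, and then to invoke the classical fundamental-matrix machinery. First I would record the deterministic identity $M_i = 1 + \sum_{e=1}^{E} V_{i,e}$: among the events $S_{i,1},\ldots,S_{i,M_i}$ the last equals $E+1$ by construction, while $S_{i,1},\ldots,S_{i,M_i-1}\in\{1,\ldots,E\}$, so the transient visits number exactly $M_i-1$. Taking conditional expectations reduces the claim on $\lambda_k$ to the identity $\boldsymbol{\kappa}_k=\ba_k^\top\bF_k$ plus a contraction against $\ones_E$. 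I would also note that, by the construction of $\bQ_k$ and $\ba_k$ from $\bGamma_k$, the first mark $S_{i,1}$ has distribution $\ba_k$ on $\{1,\ldots,E\}$ (with residual mass $r_k=\bGamma_k[1,E+1]$ placed on immediate absorption), and that transitions among transient states are governed by $\bQ_k$.

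The core computation is the formula for $\boldsymbol{\kappa}_k$. Writing $V_{i,e}=\sum_{j\ge 1}\mathds{1}_{\{S_{i,j}=e\}}$ (legitimate for transient $e$, since once the chain reaches $E+1$ it never returns to $e$), I would expand
$$\mathbb{E}[V_{i,e}\mid Z_{i,k}=1]=\sum_{j\ge 1}\mathbb{P}(S_{i,j}=e\mid Z_{i,k}=1)=\sum_{j\ge 1}\sum_{e'=1}^{E}\ba_k[e']\,(\bQ_k^{\,j-1})[e',e],$$
where the Markov property gives $\mathbb{P}(S_{i,j}=e\mid S_{i,1}=e',Z_{i,k}=1)=(\bQ_k^{\,j-1})[e',e]$ for transient $e,e'$, since any path reaching the transient state $e$ at step $j$ must remain within the transient class up to that step. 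Interchanging the (nonnegative) sums and recognizing the Neumann series $\sum_{t\ge 0}\bQ_k^{\,t}=(\bI_E-\bQ_k)^{-1}=\bF_k$ yields $\mathbb{E}[V_{i,e}\mid Z_{i,k}=1]=\sum_{e'}\ba_k[e']\bF_k[e',e]=(\ba_k^\top\bF_k)[e]$, that is $\boldsymbol{\kappa}_k=\ba_k^\top\bF_k$. Summing over $e$ and adding back the deterministic final event then gives $\lambda_k=1+\sum_{e=1}^{E}\kappa_{k,e}=1+\ba_k^\top\bF_k\ones_E$.

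The main obstacle is analytic rather than algebraic: I must guarantee that $\bF_k=(\bI_E-\bQ_k)^{-1}$ exists and that the Neumann series converges, equivalently that the spectral radius of $\bQ_k$ is strictly less than one. This is exactly where Assumption~\ref{ass:ID}.\ref{ass:Gamma} enters: since $\{1,\ldots,E\}$ forms a single transient communicating class, absorption into $E+1$ occurs almost surely and in finite expected time, forcing $\bQ_k^{\,t}\to 0$ geometrically and $\bI_E-\bQ_k$ to be invertible. This simultaneously justifies $\mathbb{E}[M_i\mid Z_{i,k}=1]<\infty$, the convergence of $\sum_{t\ge 0}\bQ_k^{\,t}$ to $\bF_k$, and the interchange of summations (all summands being nonnegative, so Tonelli applies). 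Once this finiteness is secured, the remaining steps are the routine linear-algebra manipulations indicated above.
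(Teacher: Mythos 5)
Your proof is correct, and on the $\boldsymbol{\kappa}_k$ part it coincides with the paper's argument: both write $V_{i,e}=\sum_{j\geq 1}\mathds{1}_{\{S_{i,j}=e\}}$ for transient $e$, expand $\mathbb{P}(S_{i,j}=e\mid Z_{i,k}=1)=\bigl[\ba_k^\top\bQ_k^{j-1}\bigr][e]$, and sum the Neumann series $\sum_{j\geq 0}\bQ_k^{j}=\bF_k$. Where you genuinely diverge is on $\lambda_k$, and your route is the more economical one. The paper computes $\mathbb{E}[M_i\mid Z_{i,k}=1]$ first and directly, via the first-passage distribution to the absorbing state: it establishes $\mathbb{P}(S_{i,j}=E+1,\,S_{i,j-1}\neq E+1\mid S_{i,1}=e,Z_{i,k}=1)=\bigl(\bQ_k^{j-2}\bR_k\bigr)[e]$, evaluates $\sum_{j\geq 1}j\,\bQ_k^{j-1}=\bF_k^{2}$, and then exploits the row-sum identity $\bR_k=\ones_E-\bQ_k\ones_E$, which gives $\bF_k\bR_k=\ones_E$ and hence $\bF_k^{2}\bR_k=\bF_k\ones_E$. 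You instead invoke the pathwise counting identity $M_i=1+\sum_{e=1}^{E}V_{i,e}$ (valid because $S_{i,M_i}=E+1$ while $S_{i,1},\ldots,S_{i,M_i-1}$ are transient, including the immediate-absorption case $M_i=1$) and obtain $\lambda_k$ by contracting $\boldsymbol{\kappa}_k$ against $\ones_E$; this means a single occupation-time computation does all the work, the order of the two claims is reversed relative to the paper, and the manipulations of $\sum_j j\bQ_k^{j-1}$ and of $\bR_k$ are avoided entirely. What the paper's longer direct computation buys is the explicit absorption-time distribution $\bigl(\bQ_k^{j-2}\bR_k\bigr)[e]$ as a by-product, which is of some independent interest; what yours buys is brevity and a clean reduction of $\lambda_k$ to $\boldsymbol{\kappa}_k$, at the negligible cost of verifying the deterministic identity. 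Your handling of the analytic side --- transience of $\{1,\ldots,E\}$ under Assumption~\ref{ass:ID}.\ref{ass:Gamma} forcing $\rho(\bQ_k)<1$, hence invertibility of $\bI_E-\bQ_k$, convergence of the Neumann series, finiteness of $\mathbb{E}[M_i\mid Z_{i,k}=1]$, and Tonelli for interchanging the nonnegative sums --- is exactly the justification the paper relies on, spelled out slightly more carefully than in the paper itself.
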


The following lemma provides a formula to compute $\zeta_k$ from $\bGamma_k$ and $\boldsymbol{\mu}_k = (\mu_{k,1}, \ldots, \mu_{k,E+1})$, where $\mu_{k,e}$ is the conditional expectation of $\Delta t_{i,j}$ given $Z_{i,k} = 1$ and $S_{i,j} = e$, defined by $\mu_{k,e} = \rho_{k,e,1} \rho_{k,e,2}$. This allows for a comparison of clusters based on the duration of possessions.
\begin{lemma}\label{lem:expectlength}
Under Assumption~\ref{ass:ID}.\ref{ass:Gamma}, the conditional expectation of the length of a possession that belongs to component $k$, is given by
$$
\zeta_k=     \bmu_k^\top \begin{bmatrix}
    \boldsymbol{\kappa}_k \\
    1
\end{bmatrix}.
$$
\end{lemma}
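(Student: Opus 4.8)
The plan is to express the possession duration $T_{i,M_i}$ as the sum of its inter-event times and then reduce the computation of its conditional mean to the expected event counts $\boldsymbol{\kappa}_k$ already established in Lemma~\ref{lem:expectm}. Since $T_{i,0}=0$, I would start from the identity $T_{i,M_i}=\sum_{j=1}^{M_i}\Delta t_{i,j}$ and condition on the mark sequence $\bS_i$ jointly with the event $\{Z_{i,k}=1\}$. Conditioning on $\bS_i$ fixes both the random length $M_i$ and the full vector of event types, which is precisely what turns a stochastic number of summands into a deterministic one and lets me treat the inner expectation as a finite sum.

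The central step exploits the conditional-independence structure built into \eqref{eq:model3}: given $(S_{i,j},\bZ_i)$, the increment $\Delta t_{i,j}$ is independent of everything else and follows a $\gamma(\rho_{k,S_{i,j},1},\rho_{k,S_{i,j},2})$ law, whose mean is exactly $\mu_{k,S_{i,j}}=\rho_{k,S_{i,j},1}\rho_{k,S_{i,j},2}$. Hence $\mathbb{E}[\Delta t_{i,j}\mid \bS_i,Z_{i,k}=1]=\mu_{k,S_{i,j}}$, and by linearity the inner conditional expectation becomes $\sum_{j=1}^{M_i}\mu_{k,S_{i,j}}$. Regrouping this sum by event type and writing $N_{i,e}=\sum_{j=1}^{M_i}\mathds{1}_{\{S_{i,j}=e\}}$ turns it into $\sum_{e=1}^{E+1}\mu_{k,e}N_{i,e}$, a finite sum over the $E+1$ possible event types.

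It then remains to take the outer expectation over $\bS_i$. Because the index set $\{1,\ldots,E+1\}$ is finite, linearity applies directly and gives $\zeta_k=\sum_{e=1}^{E+1}\mu_{k,e}\,\mathbb{E}[N_{i,e}\mid Z_{i,k}=1]$. Here I would invoke the bookkeeping observation that the absorbing state $E+1$ is visited exactly once, since it terminates the possession, so $N_{i,E+1}=1$ almost surely, whereas for $e\in\{1,\ldots,E\}$ one has $N_{i,e}=V_{i,e}$, whose conditional mean is the $e$-th coordinate $\kappa_{k,e}$ of $\boldsymbol{\kappa}_k=\ba_k^\top\bF_k$ supplied by Lemma~\ref{lem:expectm}. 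Substituting yields $\zeta_k=\sum_{e=1}^{E}\mu_{k,e}\kappa_{k,e}+\mu_{k,E+1}$, which is exactly the inner product $\bmu_k^\top \begin{bmatrix} \boldsymbol{\kappa}_k \\ 1 \end{bmatrix}$ claimed in the lemma.

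The only point genuinely requiring care, and the part I expect to be the main obstacle, is the finiteness and well-posedness of these expectations, since $M_i$ is random and the sum could a priori involve infinitely many terms. This is where Assumption~\ref{ass:ID}.\ref{ass:Gamma} enters: transience of the communicating class $\{1,\ldots,E\}$ guarantees that the chain is absorbed in $E+1$ in finite time almost surely and that the fundamental matrix $\bF_k=(\bI_E-\bQ_k)^{-1}$ exists, so each $\kappa_{k,e}=\mathbb{E}[V_{i,e}\mid Z_{i,k}=1]$ is finite. Combined with the finiteness of each Gamma mean $\mu_{k,e}$, this ensures $\zeta_k<\infty$ and legitimizes the conditioning-and-linearity argument; the non-negativity of the increments also permits invoking Tonelli if one prefers to interchange expectation and summation before regrouping, so no further regularity is needed.
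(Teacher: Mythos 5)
Your proof is correct and follows essentially the same route as the paper's: both decompose $T_{i,M_i}$ into inter-event increments, replace each increment by its conditional mean $\mu_{k,S_{i,j}}$, regroup by event type so that the transient counts contribute $\kappa_{k,e}$ via Lemma~\ref{lem:expectm} and the absorbing state contributes $\mu_{k,E+1}$ exactly once. The only difference is cosmetic: you condition on the full mark sequence $\bS_i$ via the tower property, whereas the paper expands $T_{i,M_i}=\Delta T_{i,M_i}+\sum_{j=1}^\infty \Delta T_{i,j}\mathds{1}_{\{S_{i,j}\neq E+1\}}$ and sums visit probabilities directly; your explicit appeal to transience and Tonelli for well-posedness is a point the paper leaves implicit.
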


\subsection{Connection to the approach of \citet{Narayanan}}
Our approach is related to the flexible marked spatio-temporal point process framework of  \citet{Narayanan}, but it operates at a different statistical scale. Whereas their model is fitted to the entire sequence of events in a match thus pooling information over a large number of possessions, we focus on the possession as the basic statistical unit. This change in scale fundamentally alters the dependency structure in the data: at the match level, long-range temporal dependencies and event-type interactions may emerge from the aggregation of many possessions, while at the possession level, sequences are short, with fewer events and more limited temporal depth. As a result, the amount of information available per statistical unit is smaller in our setting, which motivates the use of a simpler specification for the transition law between states compared to the richer parametrization adopted by \citet{Narayanan}

In their work, \citet{Narayanan} emphasize that football event data do not exhibit self-excitation and, on this basis, advocate modeling inter-event times using a gamma process, which is more general than the Poisson process and can capture both over-dispersion and under-dispersion. We adopt the same assumption and model event times within a possession using a gamma process.

Regarding the spatial component, \citet{Narayanan} discretize the pitch into three zones, producing a coarse spatial representation that facilitates inference and interpretation. In contrast, we retain the original continuous coordinates of events, which allows for a finer-grained characterization of spatial dynamics within possessions. Moreover, our spatial sub-model, based on truncated Gaussian transitions in both pitch dimensions, reduces to a two-dimensional truncated Brownian motion when the variance parameters are held constant across event types and time intervals, thereby establishing a direct connection between our formulation and a well-established continuous-space stochastic process.
 
\section{Maximum likelihood inference}\label{sec:MLE}
	We proposed to estimate the model parameters $\btheta$ by maximizing the log-likelihood function of the model defined by 
    $$\ell(\boldsymbol{\theta} ; \mathbf{X})=\sum_{i=1}^n \ln f(\bX_i;\btheta).$$
    This maximization is achieved via an Expectation-Maximization (EM) algorithm \citep{dempster1977maximum} that relies on the completed-data log-likelihood defined by 
	
	$$
	\ell_c(\boldsymbol{\theta}; \mathbf{X}, \mathbf{Z}) = \sum_{i=1}^n \sum_{k=1}^K Z_{ik} \Bigg[  \ln \pi_k  + \ln g(\bS_i;\bGamma_k) + \ln h(\bT_i\mid \bS_i; \boldsymbol{\rho}_k) + \ln \ell(\bU_i\mid\bS_i,\bT_i; \bbeta_k)\Bigg]. 
	$$ 

	The EM algorithm is an iterative algorithm randomly initialized at the model parameter $\btheta^{[0]}$. It alternates between two steps: the expectation step (E-step) which consists of computing the expectation of the complete-data log-likelihood under the current parameters and the maximization step (M-step), which consists of maximizing this expectation with respect to the model parameters. In our situation, its maximization with respect to $(\brho_k,\bbeta_k)$  does not provide a closed-form solution but it can achieved by standard optimization algorithms. To reduce the computational costs, instead of maximizing the expectation of the complete-data log-likelihood, it is sufficient to ensure that it increases at each M-step. This would lead to a Generalized Expectation-Maximization (GEM) algorithm \citep{mclachlan2008algorithm} that keeps the property of increasing the log-likelihood at each iteration. In our context, we consider the GEM such that the GM-step is defined by a maxmization over $\bpi$ and $\bGamma_k$ while independent optimizations of with respect to each $\brho_{k,e}$, and $\eta_{k,e}$ are performed by one iteration of the L-BFGS-B quasi-Newton algorithms initialized at the value of the model parameter obtained at the previous iteration of the GEM algorithm. Recall, that the L-BFGS-B method estimates the Hessian approximation from successive gradients without requiring closed-form second-order derivatives. 
    Iteration $[r]$ of the algorithm is defined by:
\begin{itemize}
		\item E-step: computing the conditional probabilities that sequence $i$ belongs to cluster $k$, given the observed data  and current parameter estimates 
		$$
		r_{i,k}(\btheta^{[r]})=\frac{ \pi_k^{[r]} g(\bS_i;\bGamma_k^{[r]})h(\bT_i\mid \bS_i; \boldsymbol{\rho}_k^{[r]})\ell(\bU_i\mid\bS_i,\bT_i;\bbeta_k^{[r]})}{f(\bX_i;\btheta^{[r]})},
        $$
        then computing the sufficient statistics $n_{k,e}^{(1)}(\btheta^{[r]})=\sum_{i=1}^n \sum_{j=1}^{M_i} r_{i,k}(\btheta^{[r]}) \mathds{1}_{\{S_{i,j}=e\}}$, $n_{k,e}^{(\Delta t)}(\btheta^{[r]})= \sum_{i=1}^n \sum_{j=1}^{M_i}  r_{i,k}(\btheta^{[r]}) \mathds{1}_{\{S_{i,j}= e\}}  \Delta t_{i,j}$ and $n_{k,e}^{(\ln\Delta t)}(\btheta^{[r]})= \sum_{i=1}^n \sum_{j=1}^{M_i}  r_{i,k}(\btheta^{[r]}) \mathds{1}_{\{S_{i,j}= e\}}  \ln \Delta t_{i,j}$.\\
		\item GM-step:
        \begin{itemize}
            \item Maximizing over the proportions and the parameters of the Markov chains
        $$
         \pi_k^{[r+1]}=\dfrac{1}{n}\sum_{i=1}^n r_{i,k}(\btheta^{[r]}) \text{ and }
         \bGamma_k^{[r+1]}[e, \tilde e]=\dfrac{1}{n_{k,e}^{(1)}(\btheta^{[r]})}\sum_{i=1}^n \sum_{j=1}^{M_i} r_{i,k}(\btheta^{[r]}) \mathds{1}_{\{S_{i,j-1}=e, S_{i,j}=\tilde e\}}.
         $$
                  \item Updating the parameters of the Gamma process using the L-BFGS-B quasi-Newton method:
    $$
    \brho_{k,e}^{[r+1]} = \texttt{L-BFGS-B}\left(\brho_{k,e}^{[r]}, \bG_{k,e}(\cdot; \btheta^{[r]})\right),
    $$
    with $\bG_k(\brho_k; \btheta^{[r]})$ denotes the gradient of the expected complete-data log-likelihood with respect to $\brho_{k,e}$ defined by
     $$
\bG_{k,e}(\brho_{k,e}; \btheta^{[r]})=\begin{bmatrix}
n_{k,e}^{(\ln\Delta t)}(\btheta^{[r]}) - n_{k,e}^{(1)}(\btheta^{[r]}) \left[  \ln (\rho_{k,e,2} ) +  \psi(\rho_ {k,e,1}) \right ] \\
  n_{k,e}^{(\Delta t)}(\btheta^{[r]})/\rho_{k,e,2}^2 - n_{k,e}^{(1)}(\btheta^{[r]}) \left( \rho_{k,e,1} / \rho_{k,e,2}\right) 
\end{bmatrix},
$$
where $\psi$ is the digamma function defined by $\psi(u)=\frac{\partial}{\partial u} \ln \Gamma(u)$. 

 \item Updating the parameters of the truncated Gaussian distributions using the L-BFGS-B quasi-Newton method:
    $$
    \eta_{k,h,e}^{[r+1]} = \texttt{L-BFGS-B}\left(\eta_{k,h,e}^{[r]}, \overline\bG_{k,h,e}(\cdot; \btheta^{[r]})\right),
    $$
 with $\overline\bG_{k,h,e}(\cdot; \btheta^{[r]})$ denotes the partial derivative of the expected complete-data log-likelihood with respect to $\eta_{k,h,e}$ defined by
$$ 
\overline\bG_{k,h,e}(\eta_{k,h,e}; \btheta^{[r]})= \sum_{i=1}^n \sum_{j=1}^{M_i} r_{i,k}(\btheta^{[r]}) \mathds{1}_{\{S_{i,j} = e\}} 
\left( \frac{ (\Delta u_{i,j,h})^2}{\eta_{k,h,e}^3} +  \frac{\zeta\left(\frac{\Delta b_{1,i,j,h}}{\eta_{k,h,e}},\frac{\Delta b_{2,i,j,h}}{\eta_{k,h,e}}\right)-1}{\eta_{k,h,e}}
 \right),
$$
where  
$\zeta(a,b) = \frac{b \phi(b) - a\phi(a)}{\Phi(b) - \Phi(a)}$.
\end{itemize}
\end{itemize}

\section{Numerical experiments}\label{sec:simu}
The purpose of this section is to illustrate, using simulated data, the behavior of the proposed clustering method.
Parameter estimation procedure used a multi-start strategy to avoid the risk of convergence to local optima. Specifically, 1000 random initializations were generated, and the 100 best candidates after 10 EM iterations were retained for refinement with 500 EM iterations. Computations were carried out on a high-performance computing cluster using 80 CPU cores in parallel. Specifically, we evaluate the accuracy of the estimated partition as well as the estimated values of the indices $\lambda_k$, $\boldsymbol{\kappa}_k$, and $\zeta_k$, since these indices are used to interpret the estimated clusters. In these simulations, each sample consists of $n$ independent observations generated from the model \eqref{eq:model1}–\eqref{eq:model4} with $K=3$ components in equal proportions and $E=5$. The parameters of the components are specified according to a scalar parameter $\tau$, which allows us to consider different levels of separation between the components. For each component $k$, we consider the following transition matrix,
$$
\Gamma_k[e,\tilde e] = \frac{1}{(1+\exp(k \tau))(E+1)} + \frac{\exp(k \tau)}{(1+\exp(k \tau))} \mathds{1}_{e=\tilde e}, 
$$
the following parameters for the gamma distributions
$$
\rho_{k,e,1}=1+ k \tau \text{ and } \rho_{k,e,2}=1,
$$
and the following parameters for the truncated Gaussian distributions defined on $[0,120]\times[0,80]$
$$
 \eta_{k,h,e} = 1 + k\tau, \quad \text{for } h = 1, 2.
$$

In our experiments, three values of $\tau$ are considered to define the following scenarios: \emph{easy case} (\emph{i.e.}, $\tau=0.65$, corresponding to a classification error of $5\%$), \emph{intermediate case} (\emph{i.e.}, $\tau=0.50$, corresponding to a classification error of $10\%$), and \emph{hard case} (\emph{i.e.}, $\tau=0.40$, corresponding to a classification error of $15\%$). In addition, we consider different sample sizes $n \in \{50, 100, 200, 400\}$. For each scenario and each sample size, 100 independent datasets are generated. To assess the accuracy of the clustering method, maximum likelihood inference is performed on each sample, and each observation is then assigned to the most likely cluster. To compare the estimated and true partitions, we compute the Adjusted Rand Index (ARI; \citet{hubert1985comparing}). Table~\ref{tab:ARI} reports the mean and standard deviation of the ARI obtained from the 100 Monte Carlo replications for the different scenarios and sample sizes. The results indicate that the accuracy of the partition improves with increasing sample size, which can be attributed to the consistency property of the maximum likelihood estimator.

\begin{table}[ht]
\centering
\begin{tabular}{rcccc}
  \hline
  scenario & \multicolumn{4}{c}{$n$} \\
\cline{2-5} & 50 & 100 & 200 & 400 \\
  \hline
easy & 0.642 (0.152) & 0.770 (0.106) & 0.829 (0.045) & 0.833 (0.028) \\ 
  intermediate & 0.444 (0.090) & 0.506 (0.108) & 0.605 (0.109) & 0.698 (0.041) \\ 
  hard & 0.356 (0.097) & 0.378 (0.065) & 0.431 (0.068) & 0.524 (0.073) \\ 
   \hline
\end{tabular}
\caption{Mean and standard deviation in parenthesis of the Adjusted Rand Index computed between the true partition and the estimated partition obtained on 100 Monte Carlo replications for different scenarios and sample sizes.}\label{tab:ARI} 
\end{table}

To evaluate the usefulness of the statistics presented in Section~\ref{subsec:prop}, we compute the estimators of the conditional expectation of the number of events in a possession (\emph{i.e.,} $\lambda_k$), the conditional expectation of the number of visits of each transient events (\emph{i.e.,} $\boldsymbol{\kappa}_k$), and the conditional expectation of the duration of a possession (\emph{i.e.,} $\zeta_k$) based on the maximum likelihood estimate. Table~\ref{tab:index} reports the mean and standard deviation of the Euclidean norm of the difference between each index and its estimator, obtained from the 100 Monte Carlo replications for the different scenarios and sample sizes. The results indicate that the accuracy of the index estimators improves with increasing sample size, which, once again, can be attributed to the consistency property of the maximum likelihood estimator.
\begin{table}[ht]
\centering
\begin{tabular}{rccccc}
  \hline
index &  scenario & \multicolumn{4}{c}{$n$} \\
\cline{3-6} & & 50 & 100 & 200 & 400 \\
  \hline
$\lambda_k$ &easy & 0.256 (0.303) & 0.167 (0.140) & 0.108 (0.072) & 0.070 (0.049) \\ 
 & intermediate & 1.049 (0.868) & 0.684 (0.464) & 0.327 (0.256) & 0.189 (0.104) \\ 
 & hard & 1.397 (0.965) & 1.004 (0.561) & 0.751 (0.481) & 0.383 (0.203) \\ 
   \hline
$\boldsymbol{\kappa}_k$ & easy & 0.167 (0.178) & 0.105 (0.067) & 0.070 (0.030) & 0.045 (0.019) \\ 
&  intermediate & 0.675 (0.426) & 0.409 (0.217) & 0.196 (0.119) & 0.116 (0.042) \\ 
 & hard & 0.878 (0.489) & 0.607 (0.252) & 0.463 (0.224) & 0.225 (0.093) \\  
   \hline
$\zeta_k$ &easy & 8.132 (6.159) & 3.659 (3.369) & 1.762 (0.777) & 1.344 (0.657) \\ 
 & intermediate & 9.103 (3.589) & 7.173 (4.089) & 3.599 (3.108) & 1.491 (0.639) \\ 
 & hard & 6.263 (2.881) & 5.384 (2.050) & 4.654 (2.363) & 2.106 (1.381) \\ 
   \hline
\end{tabular}
\caption{Mean and standard deviation in parenthesis of the  Euclidean norm of the difference between each index and its estimator computed between the true partition and the estimated partition obtained on 100 Monte Carlo replications for different scenarios and sample sizes.}\label{tab:index} 
\end{table}

\section{Analysis of soccer events data} \label{sec:analysis}
In this section we analyze the event data presented in Section~\ref{data::description}.
\subsection{Parameter estimation and model selection}

Model parameters were estimated via maximum likelihood using the Expectation-Maximization (EM) algorithm, under the same experimental conditions as in the simulation study (Section~\ref{sec:simu}). The algorithm was run for a range of component numbers $K \in \{1, \dots, 6\}$. The total computation time was approximately 3.3 hours of user CPU time. The number of components was selected using the Bayesian Information Criterion (BIC), a standard approach in clustering contexts \cite{schwarz1978estimating}. For a model with $K$ components, the BIC is defined as
\begin{equation*}
\mathrm{BIC}_{K} = \ell_{K} (\hat{\btheta}) - \frac{\nu_{K}}{2}\log(n_{\text{tot}}),
\end{equation*}
where $\ell_{K} (\hat{\btheta})$ is the maximized log-likelihood, $\nu_{K}$ is the number of free parameters, and $n_{\text{tot}}$ is the total number of transitions across all possessions. Table~\ref{tab:bic_summary} reports the BIC values for each value of $K$. The model with $K = 4$ components yielded the highest BIC and was therefore selected for further analysis.

\begin{table}[h]
\centering
\begin{tabular}{c c c c c c c}
\toprule
$K$   & 1 & 2 & 3 & 4 & 5 & 6 \\
\hline
BIC   & -269344.3 & -265120.1 & -264383.3 & \textbf{-264121.2} & -264141.4 & -264613.1 \\
\bottomrule
\end{tabular}
\caption{Bayesian Information Criterion (BIC) values for models with different numbers of components $K$. The model with $K = 4$ maximizes the BIC and is selected for subsequent analysis.}
\label{tab:bic_summary}
\end{table}

\subsection{Characteristics of possession clusters}

Our mixture model partitions possessions into four distinct clusters, each defined by its relative proportion in the dataset and its spatio-temporal characteristics. Cluster~1 accounts for 37\% of possessions, Cluster~2 for 31\%, Cluster~3 for 13\%, and Cluster~4 for 19\%. This distribution indicates a dominance of short and medium-length possessions (Clusters~1 and~2), while elaborated and long sequences (Clusters~3 and~4) remain less frequent. Each cluster represents a distinct possession profile: 

\begin{itemize}
    \item \textbf{Cluster~1:} short, direct sequences ($\lambda_1 \approx 5.6$, $\zeta_1 \approx 25.9$~sec.), dominated by rapid passing, limited ball carrying, and moderate spatial spread typical of fast transitional play.
    \item \textbf{Cluster~2:} medium-length, active possessions ($\lambda_2 \approx 11.6$, $\zeta_2 \approx 28.4$~sec.), involving a balanced mix of passes, carries, and duels, often emerging in dynamic transition phases.
    \item \textbf{Cluster~3:} intermediate possessions ($\lambda_3 \approx 11.0$, $\zeta_3 \approx 19.7$~sec.), marked by frequent carrying actions and compact spatial structures, suggesting intense, contested play under pressure.
    \item \textbf{Cluster~4:} the most elaborate possessions ($\lambda_4 \approx 15.0$, $\zeta_4 \approx 33.1$~sec.), characterized by extended passing and carrying sequences with broad pitch coverage, indicative of patient build-up and sustained control.
\end{itemize}

To further characterise the clusters beyond their transition dynamics, we use the three statistical indicators presented in Lemmas~\ref{lem:expectm} and~\ref{lem:expectlength}: the conditional expectation of the number of events in a possession ($\lambda_k$), the conditional expectation of the number of visits of each transient event ($\boldsymbol{\kappa}_k$), and the conditional expectation of the duration of a possession ($\zeta_k$), all estimated via maximum likelihood. Table~\ref{tab:possession_summary} reports these statistics across the four clusters. The clustering reveals two main dimensions of variation: structural complexity, captured by $\lambda_k$ and $\kappa_k$, and temporal intensity, measured by $\zeta_k$. These dimensions distinguish possession strategies that range from short, direct transitions to longer, more elaborate build-up sequences.

\begin{table}[ht]
\centering
\begingroup
\begin{tabular}{lrrrr}
\toprule
 & Cluster 1 & Cluster 2 & Cluster 3 & Cluster 4 \\
\midrule
$\pi_k$ (proportion) & 0.37 & 0.31 & 0.13 & 0.19 \\
$\lambda_k$ (events) & 5.60 & 11.56 & 11.01 & 15.04 \\
$\zeta_k$ (seconds) & 25.88 & 28.41 & 19.65 & 33.13 \\
$\kappa_k^{\text{Pass}}$ & 2.21 & 4.89 & 4.05 & 6.62 \\
$\kappa_k^{\text{Carries}}$ & 0.98 & 3.71 & 3.85 & 5.01 \\
$\kappa_k^{\text{Pressure}}$ & 0.79 & 1.18 & 1.58 & 1.46 \\
$\kappa_k^{\text{Ball Receipt}}$ & 0.21 & 0.23 & 0.08 & 0.25 \\
$\kappa_k^{\text{Clearance}}$ & 0.13 & 0.17 & 0.08 & 0.14 \\
$\kappa_k^{\text{Block}}$ & 0.06 & 0.07 & 0.06 & 0.09 \\
$\kappa_k^{\text{Shot}}$ & 0.05 & 0.05 & 0.10 & 0.09 \\
$\kappa_k^{\text{Duel}}$ & 0.06 & 0.09 & 0.07 & 0.07 \\
$\kappa_k^{\text{Miscontrol}}$ & 0.06 & 0.06 & 0.06 & 0.11 \\
$\kappa_k^{\text{Ball Recovery}}$ & 0.03 & 0.06 & 0.03 & 0.06 \\
$\kappa_k^{\text{Goal Keeper}}$ & 0.01 & 0.02 & 0.01 & 0.04 \\
$\kappa_k^{\text{Interception}}$ & 0.02 & 0.04 & 0.03 & 0.07 \\
$\kappa_k^{\text{Foul Won}}$ & 0.00 & 0.01 & 0.00 & 0.03 \\
\bottomrule
\end{tabular}
\endgroup
\caption{Summary of the four possession clusters, including their relative proportions ($\pi_k$), expected number of events per possession ($\lambda_k$), expected duration in seconds ($\zeta_k$), and the conditional expectations of each event type ($\kappa_k^{\text{event}}$).}
\label{tab:possession_summary}
\end{table}
\subsection{Tactical interpretation and match context}

Our clustering results align with established tactical typologies in football analytics. Cluster 1 (short, direct possessions) resembles \textit{counter-attacking} patterns described by \citealp{fernandez2024reporting}, characterized by rapid vertical progression with minimal ball circulation. Cluster 4 (elaborate build-up) corresponds to \textit{positional play} or \textit{tiki-taka} styles, emphasizing patient possession and spatial control. The emergence of intermediate Clusters (2 and 3) suggests a continuum of tactical approaches rather than discrete categories, consistent with recent findings on possession diversity.

Clusters 1 and 2 correspond to short and direct possessions. These sequences show inter-arrival time distributions concentrated near zero, reflecting rapid ball circulation and frequent events within compact spatio-temporal windows. In contrast, Cluster~4 captures longer and more elaborate possessions, with heavier-tailed temporal distributions and broader spatial dispersions, indicative of gradual build-up and wide pitch coverage. Cluster~3 occupies an intermediate position, balancing speed and structural density. Spatially, forward-oriented actions such as passes and carries display asymmetric spreads along the horizontal axis, while duels and interceptions remain tightly concentrated vertically, indicating vertical compactness. The joint modelling of temporal and spatial distributions confirms the heterogeneity of match dynamics and distinct possession styles.

At the team and match level, the prevalence of each cluster varies with tactical context and venue. Cluster~1 often dominates away matches, favoring quick transitions and reactive play. For instance, Nîmes on matchday~3 (58.6\%) and Strasbourg on matchday~12 (57.6\%) relied heavily on Cluster~1, achieving wins or draws through efficient transitions and exploitation of opponent space. Similarly, Brest (58.8\%, matchday~9) and Marseille (54.1\%, matchday~15) successfully translated strong Cluster~1 profiles into victories.

Cluster~2, characterized by dense and short possessions, typically emerges under high pressing or compact mid-blocks. Examples include Angers (50.8\%, matchday~8) and Saint-Étienne (46.8\%, matchday~5), with mixed outcomes, highlighting that the success of Cluster~2 depends on execution quality and territorial control. Strong Cluster~2 patterns during defeats (e.g., Lorient, 50.7\%, matchday~23; Lyon, 39.7\%, matchday~28) suggest that this profile can reflect reactive containment rather than sustained offensive progression.

Cluster~3 represents intermediate or transitional possessions, often appearing during adaptation or instability in buildup phases. Peaks such as Metz (16.9\%, matchday~30) and Bordeaux (19.4\%, matchday~35) occur in matches with mixed outcomes, reflecting tactical inconsistency or challenges in maintaining rhythm under pressure. 

Cluster~4 reflects the most elaborate possessions, with extended, spatially expansive sequences. Teams such as Reims (22.4\%, matchday~6; 25.4\%, matchday~31), Angers (35.6\%, matchday~8), and Dijon (26.9\%, matchday~34) use Cluster~4 to control the game and impose structure. However, outcomes vary Reims draws, Dijon wins, Angers loses showing that possession elaboration alone does not ensure success and must be coupled with effective offensive penetration.

Overall, a clear pattern emerges between cluster dominance and match result. Wins are often associated with Cluster~1 or Cluster~4, while defeats occur more frequently with Cluster~2 dominance. Draws exhibit balanced cluster distributions, reflecting equilibrium between control and adaptation. Away teams favor Cluster~1, emphasizing direct transitions, whereas home teams engage more in Clusters~2 and~4, aligning with positional play and initiative-taking. Tracking these latent clusters over time allows coaches and analysts to identify when tactical styles translate into competitive advantage and how teams alternate between control, reactivity, and adaptability. Figure~\ref{fig:boxplot_match_outcome} illustrates systematic patterns of cluster usage across match outcomes, confirming the interpretations above. Finally, Table~\ref{tab:clusters_wide} in Appendix reports the distribution of clusters across the 38 matches considered. 

\subsection{Practical implications for training}
The identified clusters enable targeted training scenario design for Rennes goalkeepers. For instance, sessions focusing on \textit{counter-attack defense} (Cluster 1) can be populated exclusively with possessions from this cluster, ensuring realistic temporal dynamics (mean duration $\approx 26$ sec.) and spatial patterns (shots occurring at mean distance 18.2 $m$ from goal, as inferred from the spatial model). Moreover, tracking cluster distributions over time (Table \ref{tab:clusters_wide}) allows coaches to identify opponents' tactical adaptations mid-season. For example, the shift in Marseille's cluster profile between matchdays 15 (54\% Cluster 1) and 22 (43\% Cluster 2) suggests a strategic adjustment that could inform pre-match preparation. Future integration with virtual reality (VR) platforms will leverage our generative model to simulate novel but plausible possession sequences, augmenting real data with synthetic scenarios.

\begin{figure}[htp!]
    \centering
    \includegraphics[width=0.8\linewidth]{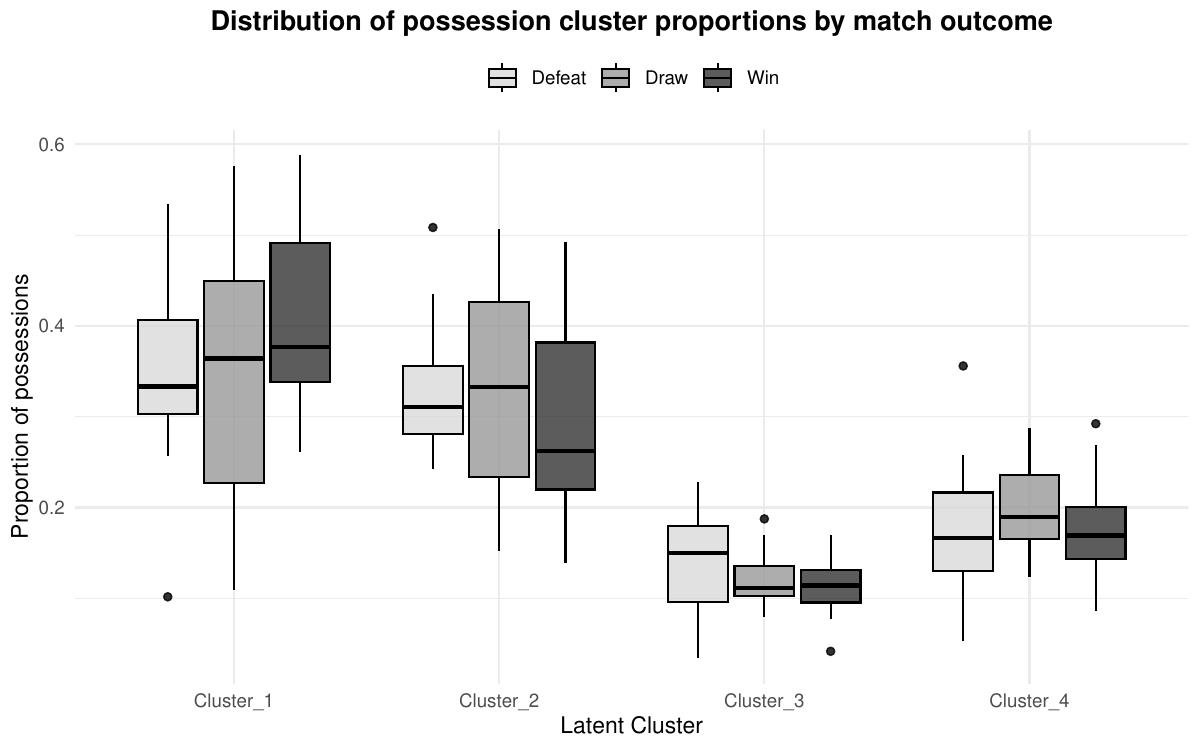}
    \caption{Distribution of possession cluster proportions according to match outcomes (win, draw, defeat). Winning teams tend to exhibit higher proportions of Cluster 1 (direct and reactive play) and Cluster 4 (elaborated build-ups), while Cluster 2 (dense and short possessions) is more prominent in defeats. Cluster 3 remains generally low and less sensitive to outcome. This suggests a performance advantage for either direct transitional or structured positional play, depending on context.}
    \label{fig:boxplot_match_outcome}
\end{figure}

\section{Conclusion}\label{sec:concl}
In this work, we introduced a novel mixture model for clustering marked spatio-temporal point processes, in which clusters are characterized jointly by the distribution of event types (marks), their arrival times, and their spatial locations. Methodologically, the main contribution lies in integrating the full joint distribution of marks, inter-event times, and spatial coordinates within a probabilistic mixture framework, thus enabling clustering at the level of entire point process realizations. This allows the identification of latent patterns that emerge only when considering the complete temporal and spatial structure of sequences, rather than treating events in isolation. Applied to football analytics, our framework makes it possible to cluster entire possessions sequences of on-ball actions into distinct tactical scenarios. While our primary motivation was to populate the environment of a virtual reality training tool for Rennes goalkeepers with realistic and diverse possession sequences, the methodology also offers a powerful tool for performance analysis. In particular, it can be used to assess whether a team’s style of play or defensive organization exhibits significant shifts during a match, providing both practitioners and analysts with actionable insights.

Beyond football, our modeling framework can be adapted to event data from a wide range of sports, provided that their specific characteristics are taken into account. For example, in rugby, where passes must be made backward, the conditional distribution of spatial coordinate differences between events would need to be reformulated, as a truncated normal distribution may not be appropriate. Similarly, in basketball, the 24-second shot clock imposes a hard temporal limit on possessions, which must be explicitly incorporated into the arrival-time distribution of events, with shots treated as terminal events. Nevertheless, these adaptations are straightforward thanks to the model’s conditional structure on the current event type, making it flexible and readily transferable to other sporting contexts.

In sum, our approach bridges the gap between advanced point process modeling and practical sports analytics, offering a versatile, interpretable, and domain-adaptable tool for uncovering and exploiting the hidden structures in event-driven data.

\begin{funding}
This research was funded by the French Agence Nationale de la Recherche, grant number ANR-19-STPH-004 and conducted within the framework of the PIA EUR DIGISPORT project (ANR-18-EURE-0022). 
\end{funding}

\begin{Conflicts of interest}
  The authors declare that they have no conflicts of interest.  
\end{Conflicts of interest}



\begin{appendices} 
\section{Proofs of the mathematical results} \label{sec:proofs}
\begin{proof}[Proof of Lemma~\ref{lem:idengrl}]
The parameters of \eqref{eq:model1}-\eqref{eq:model4} are identifiable up to label swapping if
\begin{equation} \label{eq:startID}
\forall \bx,\, f(\bx ; \btheta) = f(\bx ; \widetilde\btheta),
\end{equation}
implies
$$ \btheta \equiv \widetilde\btheta,$$
where $\btheta  \equiv \widetilde \btheta $ means that it exists a permutation of the labels of the $K$ components of $\widetilde\btheta$, denoted by $\sigma$, such that $\btheta=\sigma(\widetilde\btheta)$. 
Without loss of generality, we assume that the labeling of the mixture components induces an ordering among the gamma distributions associated with the fixed event $e_0$. Since the parameters of the gamma distributions are distinct (see Assumption~\ref{ass:ID}.\ref{ass:rho}), we consider that
\begin{equation}\label{eq:ord1}
\rho_{k,e_0,2} \geq \rho_{k+1,e_0,2} \text{ and } \rho_{k,e_0,1} > \rho_{k+1,e_0,1} \text{ if } \rho_{k,e_0,2} =  \rho_{k+1,e_0,2}
\end{equation}
and
\begin{equation}\label{eq:ord2}
\widetilde\rho_{k,e_0,2} \geq \widetilde\rho_{k+1,e_0,2} \text{ and } \widetilde\rho_{k,e_0,1} > \widetilde\rho_{k+1,e_0,1} \text{ if } \widetilde\rho_{k,e_0,2} =  \widetilde\rho_{k+1,e_0,2}.
\end{equation}
In addition, without loss of generality, we consider 
\begin{equation}\label{eq:ord3}
\rho_{k,e_0,2} \geq \widetilde \rho_{k,e_0,2}  \text{ and }  \rho_{1,e_0,1} \geq \widetilde\rho_{1,e_0,1} \text{ if } \rho_{1,e_0,2} =  \widetilde\rho_{1,e_0,2}.
\end{equation}
Consider a fixed sequence $\bs^{(e_0)}=(s^{(e_0)}_0,\ldots,s^{(e_0)}_{M_0})^\top$ of arbitrary length $M_0$ that never takes the value $E+1$ and such that $s^{(e_0)}_{M_0}=e_0$. Note that there exists at least one $M_0$ such that the probability to observe $\bs^{(e_0)}$ is strictly positive under the finite Markov chain defined by $\bGamma_1$ (i.e., $g(\bs^{(e_0)};\bGamma_1)>0$). The existence of such $\bs^{(e_0)}$ is ensured by Assumption~\ref{ass:ID}.\ref{ass:Gamma}. Hence, considering the marginal distribution over the first $M_0$ events and times of arrivals, then \eqref{eq:startID} implies that
\begin{multline}\label{eq:equal}
\forall (\bs^{(e_0)},\Delta\bt, \Delta t_{M_0}),\, \sum_{k=1}^K \omega_k(\bs^{(e_0)},\Delta \bt^{(e_0)};\btheta) \gamma(\Delta t_{M_0};\rho_{k,e_0,1},\rho_{k,e_0,2})\\=\sum_{k=1}^K \omega_k(\bs^{(e_0)},\Delta \bt^{(e_0)};\widetilde\btheta) \gamma(\Delta t_{M_0};\widetilde\rho_{k,e_0,1},\widetilde\rho_{k,e_0,2}),    
\end{multline}
where $\Delta \bt = (\Delta t_1,\ldots,\Delta t_{M_0-1})$ groups some arbitrary $M_0-1$ times elapsed between events and
$$
\omega_k(\bs^{(e_0)},\Delta \bt;\btheta)= \pi_k g(\bs^{(e_0)};\bGamma_k)\left(\prod_{j=1}^{M_{0}-1} \prod_{e=1}^{E+1} \left[\gamma(\Delta t_{j}; \rho_{k,e,1}, \rho_{k,e,2})\right]^{\mathds{1}_{\{s_{j}^{(e_0)}=e\}}}\right).
$$
Since all the proportions are strictly positive (see Assumption~\ref{ass:ID}.\ref{ass:prop}), $\omega_1(\bs^{(e_0)},\Delta \bt^{(e_0)};\btheta)$ is strictly positive. Hence, dividing both sides of \eqref{eq:equal} by $\omega_1(\bs^{(e_0)},\Delta \bt^{(e_0)};\btheta) \gamma(\Delta t_{M_0};\rho_{1,e,1},\rho_{1,e,2})$ leads to
\begin{multline*}
\forall (\bs^{(e_0)},\Delta\bt, \Delta t_{M_0}),\,
1 + \sum_{k=2}^K \frac{\omega_k(\bs^{(e_0)},\Delta \bt^{(e_0)};\btheta) \gamma(\Delta t_{M_0};\rho_{k,e,1},\rho_{k,e,2})}{\omega_1(\bs^{(e_0)},\Delta \bt^{(e_0)};\btheta) \gamma(\Delta t_{M_0};\rho_{1,e_0,1},\rho_{1,e_0,2})} \\ =\sum_{k=1}^K \frac{\omega_k(\bs^{(e_0)},\Delta \bt^{(e_0)};\widetilde\btheta) \gamma(\Delta t_{M_0};\widetilde\rho_{k,e_0,1},\widetilde\rho_{k,e_0,2})}{\omega_1(\bs^{(e_0)},\Delta \bt^{(e_0)};\btheta) \gamma(\Delta t_{M_0};\rho_{1,e,1},\rho_{1,e,2})}.
\end{multline*}
Using the ordering conditions \eqref{eq:ord1}-\eqref{eq:ord3}, then taking the limit as $\Delta t_{M_0} \to \infty$ in the previous equation implies that
$$
\forall (\bs^{(e_0)},\Delta\bt),\, 1 =  \frac{\omega_1(\bs^{(e_0)},\Delta \bt^{(e_0)};\widetilde\btheta) }{\omega_1(\bs^{(e_0)},\Delta \bt^{(e_0)};\btheta) } \lim_{\Delta t_{M_0} \to \infty}  \frac{ \gamma(\Delta t_{M_0};\widetilde\rho_{1,e_0,1},\widetilde\rho_{1,e_0,2})}{ \gamma(\Delta t_{M_0};\rho_{1,e_0,1},\rho_{1,e_0,2})}.
$$
By respecting \eqref{eq:ord3},  $ \lim_{\Delta t_{M_0} \to \infty} \frac{ \gamma(\Delta t_{M_0};\widetilde\rho_{1,e_0,1},\widetilde\rho_{1,e_0,2})}{ \gamma(\Delta t_{M_0};\rho_{1,e_0,1},\rho_{1,e_0,2})}$ is equal to one only if $\brho_{1,e_0}=\widetilde\brho_{1,e_0}$ and it is equal to zero otherwise. Thus, the previous equation implies that
$$
\brho_{1,e_0}=\widetilde\brho_{1,e_0} \text{ and } \forall(\bs^{(e_0)},\bt^{(e_0)}),\, \omega_1(\bs^{(e_0)},\Delta \bt^{(e_0)};\btheta)=\omega_1(\bs^{(e_0)},\Delta \bt^{(e_0)};\widetilde\btheta).
$$
Repeating the same reasoning over the indices of the components leads to
$$
\forall k\in\{1,\ldots,K\}, \brho_{k,e_0}=\widetilde\brho_{k,e_0} \text{ and } \forall(\bs^{(e_0)},\bt^{(e_0)}),\, \omega_k(\bs^{(e_0)},\Delta \bt^{(e_0)};\btheta)=\omega_k(\bs^{(e_0)},\Delta \bt^{(e_0)};\widetilde\btheta).
$$
Since Assumption~\ref{ass:ID}.\ref{ass:Gamma} ensures that any event in $\{1,\ldots,E+1\}$ can be reached from the initial event 0, we can repeat the same reasoning over the indices of the events (note that this would require a new ordering between components  \eqref{eq:ord1}-\eqref{eq:ord3}). This leads in particular to
$$
\brho=\widetilde\brho.
$$
Using the previous equality and considering the marginal distribution of $(\bS,\bT)$, then \eqref{eq:startID} implies that
$$
\forall(\bs,\bt),\, \sum_{k=1}^K \left( \pi_k g(\bs;\bGamma_k) - \widetilde\pi_k \widetilde g(\bs;\bGamma_k)\right) h(\bt \mid \bs;\brho_k) = 0.
$$
Since $\brho_k \neq \brho_{\ell}$ for $k\neq \ell$, then we have that the family of density functions $\{h(\bt \mid \bs;\brho_1),\ldots,h(\bt \mid \bs;\brho_K)\}$ is composed of $K$ linearly independent functions. Therefore, we have
$$
\forall \bs,\, \forall k,\,   \pi_k g(\bs;\bGamma_k) = \widetilde\pi_k \widetilde g(\bs;\bGamma_k).
$$ 
Summing over all the possible values of $\bs$ implies that $\widetilde \pi_k = \pi_k$, then considering all the specific couples of events implies that $\widetilde \bGamma_k=\bGamma_k$. Using the previous equality, then \eqref{eq:startID} implies that
$$
\forall (\bs,\bt,\bu),\, \sum_{k=1}^{K} \pi_k g(\bs;\bGamma_k)h(\bt\mid \bs; \boldsymbol{\rho}_k)(\ell(\bu\mid\bs,\bt; \bbeta_k)-\ell(\bu\mid\bs,\bt; \widetilde\bbeta_k))=0. 
$$
Again, using the fact that the family of functions ${ \pi_1 g(\bs;\bGamma_1)h(\bt \mid \bs;\brho_1),\ldots, \pi_K g(\bs;\bGamma_K)h(\bt \mid \bs;\brho_K)}$ is composed of $K$ linearly independent functions, we obtain that $\widetilde\bbeta=\bbeta$ and thus $\widetilde\btheta=\btheta$, which concludes the proof.
\end{proof}

\begin{proof}[Proof of Lemma~\ref{lem:expectm}]
To compute the conditional expectation of the number of events of a possession given its cluster membership (\emph{i.e.,} the conditional expectation of $M_i$ given $Z_{ik}=1$), we use the following decomposition of the transition matrix
$$
\bGamma_ k  = \begin{bmatrix}
\ba^\top_k &  r_k \\
\bQ_k & \bR_k
\end{bmatrix}.
$$
Noting that $S_{i,0}=0$ with probability one, we have
$$
\mathbb{E}[M_i\mid Z_{i,k}=1] = \sum_{e=1}^{E+1}\mathbb{E}[M_i\mid Z_{i,k}=1,S_{i,1}=e] \mathbb{P}(S_{i,1}=e\mid S_{i,0}=0,Z_{i,k}=1).
$$
Note that $M_i$ is equal to the first index $j$ such that $S_{i,j}=E+1$. Since $E+1$ is an absorbing state, we have
$$
\forall e\in\mathcal{E},\, \mathbb{E}[M_i\mid Z_{i,k}=1,S_{i,1}=e] =\sum_{j=1}^\infty j \mathbb{P}(S_{i,j}= E+1, \{S_{i,j-1}\neq E+1\} \mid S_{i,1}=e, Z_{i,k}=1). 
$$
Noting that  $M_i=1$  when $S_{i,1}=E+1$ and that $\mathbb{P}(S_{i,1}=E+1\mid S_{i,0}=0)=r_k$, we have
$$
\mathbb{E}[M_i\mid Z_{i,k}=1] = r_k + \sum_{e=1}^{E}\mathbb{E}[M_i\mid Z_{i,k}=1,S_{i,1}=e]\mathbb{P}(S_{i,1}=e\mid S_{i,0}=0,Z_{i,k}=1).
$$
The conditional probability to reach the absorbing state at the event $j$ (\emph{i.e.,} $S_{i,j}=E+1$ and $S_{i,j-1}\neq E+1$) conditionally on $Z_{i,k}=1$ and $S_{i,1}=e$ is defined for any $e\in\{1,\ldots,E\}$ by
\begin{multline*}
\mathbb{P}(S_{i,j}= E+1, \{S_{i,j-1}\neq E+1\} \mid S_{i,1}=e, Z_{i,k}=1) \\= \sum_{\tilde e =1}^E \mathbb{P}(S_{i,j-1}= \tilde e \mid S_{i,1}=e, Z_{i,k}=1)\mathbb{P}(S_{i,j}= E+1 \mid S_{i,j-1}=\tilde e,S_{i,1}=e, Z_{i,k}=1).
\end{multline*}
If $j=1$ then $\mathbb{P}(S_{i,j}= E+1, \{S_{i,j-1}\neq E+1\} \mid S_{i,1}=e, Z_{i,k}=1)=0$ since $e\neq E+1$. If $j=2$, $\mathbb{P}(S_{i,j}= E+1, \{S_{i,j-1}\neq E+1\} \mid S_{i,1}=e, Z_{i,k}=1)$ is equal to the element $e$ of vector $\bR_k$. In addition, we have $\mathbb{P}(S_{i,j}= E+1 \mid S_{i,j-1}=\tilde e,S_{i,1}=e, Z_{i,k}=1)=\mathbb{P}(S_{i,j}= E+1 \mid S_{i,j-1}=\tilde e, Z_{i,k}=1)$ and this probability corresponds to the element $e$ of vector $\bR_k$. Finally, if $j>2$, $\mathbb{P}(S_{i,j-1}= \tilde e \mid S_{i,1}=e, Z_{i,k}=1)$ corresponds to the element $(e,\tilde e)$ of $\bQ_k^{j-1}$. Hence, noting that $\bQ^{0}=\bI_E$, we have
$$
\mathbb{P}(S_{i,j}= E+1, \{S_{i,j-1}\neq E+1\} \mid S_{i,1}=e, Z_{i,k}=1)=\left(\bQ_k^{j-2}\bR_k\right)[e].
$$
For any $e\in\{1,\ldots,E\}$, we have
$$
\mathbb{E}[M_i\mid Z_{i,k}=1, S_{i,1}=e] 
=\sum_{j=1}^\infty (j +1)\left(\bQ_k^{j-1}\bR_k\right)[e].
$$
This implies that
$$
\mathbb{E}[M_i\mid Z_{i,k}=1, S_{i,1}=e] = \sum_{j=0}^\infty  \left(\bQ_k^{j}\bR_k\right)[e] + \sum_{j=1}^\infty j \left(\bQ_k^{j-1}\bR_k\right)[e].    
$$
Since the spectral radius of $\bQ_k$ satisfies $\rho(\bQ_k) < 1$, the Neumann series $\sum_{j=0}^\infty \bQ_k^j$ converges and equals $\bF_k$. Consider the series $\sum_{j=1}^\infty j \bQ_k^{j-1}$. Multiplying it on the left by $(\bI - \bQ_k)$ and regrouping the terms according to the power of $\bQ_k$ shows that the coefficient of $\bQ_k^0$ is $1$ and, for each $j \geq 1$, the coefficient of $\bQ_k^j$ is also $1$. Therefore,
$(\bI-\bQ_k)\sum_{j=1}^\infty \bQ_k^{j-1}=\sum_{j=0}^\infty \bQ_k^j$, 
this implies that $\sum_{j=1}^\infty j \bQ_k^{j-1}=\bF_k^2$ where $\bF_k=(\bI_E - \bQ_k)^{-1}$, leading that
$
\sum_{j=1}^\infty j \left(\bQ_k^{j-1}\bR_k\right) =\bF_k^2\bR_k$ and thus
$$
\mathbb{E}[M_i\mid Z_{i,k}=1, S_{i,1}=e] = \left(\bF_k \bR_k\right)[e] +   \left(\bF_k^2\bR_k\right)[e].
$$
Using the fact that the transition matrix is stochastic, we have 
$\bQ_k \ones_E + \bR_k = \ones_E$, where $\ones_{E}$ is the vector of $E$ ones,  leading that 
 $  \bR_k = \ones_E - \bQ_k \ones_E$ and thus, we have $(\bI-\bQ_k)^{-1}\bR_k=\ones_E$.  Applying the inverse matrix $\bF_k$ once more to both sides yields
$\bF_k^2\bR_k=\bF_k\ones_E$, and thus
$$
\mathbb{E}[M_i\mid Z_{i,k}=1, S_{i,1}=e] =1 + \left(\bF_k\ones_E\right)[e].
$$
Therefore, we have
$$
\mathbb{E}[M_i\mid Z_{i,k}=1] =1+\ba_k^\top \bF_k\ones_E .
$$
The second assertion is obtained by noting the total number of visits to event $e$, given $Z_{i,k}=1$, can be written as the sum over all $j$ of the indicator that $S_{i,j}=e$. Indeed, we have
\begin{align*}
    V_{i,e}&=\sum_{j=1}^{M_i} \mathds{1}_{S_{i,j}=e}\\
 &=\sum_{j=1}^{\infty} \mathds{1}_{S_{i,j}=e}\mathds{1}_{j\leq M_i}.
\end{align*}
Noting that for any $e\in\{1,\ldots,E\}$, $\{S_{i,j}=e\}$ implies that $\{j<M_i\}$ then $\mathds{1}_{S_{i,j}=e}\mathds{1}_{j\leq M_i}=\mathds{1}_{S_{i,j}=e}$ leading that for any $e\in\{1,\ldots,E\}$
$$
 V_{i,e}=\sum_{j=1}^{\infty} \mathds{1}_{S_{i,j}=e}.
$$
Hence, for any $e\in\{1,\ldots,E\}$, the conditional expectation of $V_{i,e}$ given $Z_{i,k}=1$ is simply the sum of these probabilities over $j$.
\begin{align*}
\mathbb{E}[V_{i,e}\mid Z_{i,k}=1] &= \sum_{j=1}^{\infty}\mathbb{E}[ \mathds{1}_{S_{i,j}=e}\mid Z_{i,k}=1] \\
&= \sum_{j=1}^{\infty}\mathbb{P}( S_{i,j}=e\mid Z_{i,k}=1).
\end{align*}
Recall that, for any $e\in\{1,\ldots,E\}$ and $j\geq 1$, we have $\mathbb{P}(S_{i,j}=e \mid Z_{i,k}=1) = \left[\ba_k^\top \bQ_k^{j-1}\right][e]$.
The proof is complete by taking the sum of both sides of the previous equation over all $j$ from 1 to infinity and by noting that $\sum_{j=0}^\infty \bQ_k^j=\bF_k$.
\end{proof}

\begin{proof}[Proof of Lemma~\ref{lem:expectlength}]
Note that 
\begin{align*}
    T_{i,M_i}&=\sum_{j=1}^{M_i} \Delta T_{i,j}\\
    &=\sum_{j=1}^\infty \Delta T_{i,j}\mathds{1}_{j\leq M_i}\\
    &=\Delta T_{i,M_i} + \sum_{j=1}^\infty \Delta T_{i,j}\mathds{1}_{j< M_i}\\
    &=\Delta T_{i,M_i}+\sum_{j=1}^\infty \Delta T_{i,j} \mathds{1}_{S_{i,j}\neq E+1}.
\end{align*}
Since by Assumption~\ref{ass:ID}.\ref{ass:Gamma}, we have that $\sum_{j=1}^\infty \mathbb{P}(S_{i,j}= E+1 \mid Z_{i,k}=1)=1$, and that $\mathbb{E}[ \Delta T_{i,M_i} \mid S_{i,j}=E+1, Z_{i,k}=1]=\mu_{k,E+1}$, then
$$
\mathbb{E}[\Delta T_{i,M_i}\mid Z_{i,k}=1] = \mu_{k,E+1}.
$$
Note that 
$$
\sum_{j=1}^\infty \Delta T_{i,j} \mathds{1}_{S_{i,j}\neq E+1}=\sum_{e=1}^E\sum_{j=1}^\infty \Delta T_{i,j} \mathds{1}_{S_{i,j}=e}.
$$
Since $\sum_{j=1}^\infty \mathbb{P}(S_{i,j}= e \mid Z_{i,k}=1)=\kappa_{k,e}$ where $\kappa_{k,e}$ is the element $e$ of $\boldsymbol{\kappa}_k$ and that  $\mathbb{E}[\Delta T_{i,M_i}\mid S_{i,j}=e, Z_{i,k}=1]=\mu_{k,e}$, then
$$
\mathbb{E}[\sum_{j=1}^\infty \Delta T_{i,j} \mathds{1}_{S_{i,j}=e}\mid Z_{i,k}=1] =\kappa_{k,e}\mu_{k,e}.
$$
There, using the linearity of the expectation provides that
$$
\mathbb{E}[ T_{i,M_i}\mid Z_{i,k}=1] = \bmu_k^\top \begin{bmatrix}
    \boldsymbol{\kappa}_k \\
    1
\end{bmatrix}.
$$
 
\end{proof}

\newpage

\section{Details of the estimated possession model}
\begin{table}[h]
\centering
\resizebox{\textwidth}{!}{%
\begin{tabular}{lcccccccccc}
\hline
\textbf{Team} & \textbf{Matchday} & \textbf{Cluster 1} & \textbf{Cluster 2} & \textbf{Cluster 3} & \textbf{Cluster 4} & \textbf{Location} & \textbf{Score} & \textbf{Status}   \\
\hline
              Lille &         1 &      0.443 &      0.320 &      0.113 &      0.124 & Away & 1-1 &    Draw \\
        Montpellier &         2 &      0.507 &      0.240 &      0.147 &      0.107 & Home & 2-1 &     Win \\
              Nîmes &         3 &      0.586 &      0.172 &      0.155 &      0.086 & Away & 4-2 &     Win \\
          AS Monaco &         4 &      0.377 &      0.262 &      0.098 &      0.262 & Home & 2-1 &     Win \\
      Saint-Étienne &         5 &      0.278 &      0.468 &      0.114 &      0.139 & Away & 3-0 &     Win \\
     Stade de Reims &         6 &      0.155 &      0.483 &      0.138 &      0.224 & Home & 2-2 &    Draw \\
              Dijon &         7 &      0.452 &      0.210 &      0.129 &      0.210 & Away & 1-1 &    Draw \\
             Angers &         8 &      0.102 &      0.508 &      0.034 &      0.356 & Home & 1-2 &  Defeat \\
     Stade Brestois &         9 &      0.588 &      0.216 &      0.078 &      0.118 & Home & 2-1 &     Win \\
Paris Saint-Germain &        10 &      0.407 &      0.356 &      0.085 &      0.153 & Away & 0-3 &  Defeat \\
           Bordeaux &        11 &      0.299 &      0.287 &      0.172 &      0.241 & Home & 0-1 &  Defeat \\
         Strasbourg &        12 &      0.576 &      0.153 &      0.102 &      0.169 & Away & 1-1 &    Draw \\
               Lens &        13 &      0.338 &      0.311 &      0.162 &      0.189 & Home & 0-2 &  Defeat \\
           OGC Nice &        14 &      0.325 &      0.416 &      0.104 &      0.156 & Away & 1-0 &     Win \\
          Marseille &        15 &      0.541 &      0.180 &      0.131 &      0.148 & Home & 2-1 &     Win \\
            Lorient &        16 &      0.439 &      0.281 &      0.228 &      0.053 & Away & 0-3 &  Defeat \\
               Metz &        17 &      0.348 &      0.348 &      0.130 &      0.174 & Home & 1-0 &     Win \\
             Nantes &        18 &      0.382 &      0.345 &      0.109 &      0.164 & Away & 0-0 &    Draw \\
               Lyon &        19 &      0.468 &      0.202 &      0.170 &      0.160 & Home & 2-2 &    Draw \\
     Stade Brestois &        20 &      0.361 &      0.344 &      0.115 &      0.180 & Home & 2-1 &     Win \\
              Lille &        21 &      0.534 &      0.260 &      0.096 &      0.110 & Away & 0-1 &  Defeat \\
          Marseille &        22 &      0.362 &      0.435 &      0.072 &      0.130 & Away & 0-1 &  Defeat \\
            Lorient &        23 &      0.110 &      0.507 &      0.096 &      0.288 & Home & 1-1 &    Draw \\
               Lens &        24 &      0.347 &      0.307 &      0.107 &      0.240 & Away & 0-0 &    Draw \\
      Saint-Étienne &        25 &      0.303 &      0.348 &      0.182 &      0.167 & Home & 0-2 &  Defeat \\
        Montpellier &        26 &      0.333 &      0.300 &      0.150 &      0.217 & Away & 1-2 &  Defeat \\
           OGC Nice &        27 &      0.495 &      0.273 &      0.121 &      0.111 & Home & 1-2 &  Defeat \\
               Lyon &        28 &      0.256 &      0.397 &      0.179 &      0.167 & Away & 0-1 &  Defeat \\
         Strasbourg &        29 &      0.262 &      0.492 &      0.077 &      0.169 & Away & 1-0 &     Win \\
               Metz &        30 &      0.390 &      0.254 &      0.169 &      0.186 & Away & 3-1 &     Win \\
     Stade de Reims &        31 &      0.238 &      0.429 &      0.079 &      0.254 & Away & 2-2 &    Draw \\
             Nantes &        32 &      0.343 &      0.343 &      0.100 &      0.214 & Home & 1-0 &     Win \\
             Angers &        33 &      0.333 &      0.458 &      0.042 &      0.167 & Home & 3-0 &     Win \\
              Dijon &        34 &      0.388 &      0.224 &      0.119 &      0.269 & Away & 5-1 &     Win \\
           Bordeaux &        35 &      0.306 &      0.242 &      0.194 &      0.258 & Home & 0-1 &  Defeat \\
Paris Saint-Germain &        36 &      0.223 &      0.420 &      0.188 &      0.170 & Away & 1-1 &    Draw \\
          AS Monaco &        37 &      0.318 &      0.333 &      0.136 &      0.212 & Home & 1-2 &  Defeat \\
              Nîmes &        38 &      0.477 &      0.138 &      0.092 &      0.292 & Away & 2-0 &     Win \\
\hline
\end{tabular}%
}
\caption{
Proportion of possession sequences across latent clusters for each team and matchday. 
Each row summarizes a team’s cluster profile for a specific fixture, including match location, final score, and outcome from Rennes' perspective. The table illustrates tactical variability across matches, with certain possession styles emerging more frequently in wins or defeats.
}
\label{tab:clusters_wide}
\end{table}
\end{appendices}

\bibliography{reference}     
\end{document}